\documentclass[aps,prd,amsmath,amssymb,10pt,letterpaper,balancelastpage,showpacs,superscriptaddress,nofootinbib,floatfix,notitlepage,preprintnumbers,longbibliography,twocolumn]{revtex4-2}

\usepackage{graphicx}
\usepackage{soul}
\usepackage[caption=false]{subfig}
\usepackage{color}
\usepackage{xcolor}
\usepackage{enumitem}
\usepackage{cancel}
\usepackage{mathtools}
\usepackage{hyperref}
\usepackage{orcidlink}
\usepackage{cases}
\usepackage[normalem]{ulem}

\usepackage{amsthm}
\newtheorem{theorem}{Theorem}

\newtheorem{lemma}[theorem]{Lemma}
\theoremstyle{remark}
\newtheorem*{remark}{Remark}
\theoremstyle{remark}

\theoremstyle{plain}
\newtheorem{manuallemmainner}{Lemma}
\newenvironment{manuallemma}[1]{%
  \IfBlankTF{#1}
    {\renewcommand{\themanuallemmainner}{\unskip}}
    {\renewcommand\themanuallemmainner{#1}}%
  \manuallemmainner
}{\endmanuallemmainner}

\newtheorem{manualtheoreminner}{Theorem}
\newenvironment{manualtheorem}[1]{%
  \IfBlankTF{#1}
    {\renewcommand{\themanualtheoreminner}{\unskip}}
    {\renewcommand\themanualtheoreminner{#1}}%
  \manualtheoreminner
}{\endmanualtheoreminner}

\usepackage{tabularray}
\UseTblrLibrary{booktabs} 

\newcommand{\pfrac}[2]{{\left(\frac{#1}{#2}\right)}}

\newcommand{\MAX}{{\rm (max)}}

\newcommand{\avg}{{\rm avg}}
\newcommand{\tot}{{\rm tot}}
\newcommand{\Var}{{\rm Var}}

\newcommand{\bra}[1]{\left\langle #1 \right|}
\newcommand{\ket}[1]{\left|#1\right\rangle}
\newcommand{\braket}[2]{\left\langle#1 |  #2\right\rangle} \newcommand{\expect}[1]{\left\langle#1\right\rangle}

\newcommand{\Tr}{{\rm Tr}}
\newcommand{\nnorm}[1]{\left|\left|#1\right|\right|}

\newcommand{\EMF}{E_{\rm MF}}

\DeclareMathOperator*{\Avg}{avg}

\newcommand{\TrD}[2]{\frac{1}{2}\left|\left|#1 - #2\right|\right|_1}

\definecolor{figureRed}{RGB}{255, 127, 127}
\definecolor{figureGreen}{RGB}{127, 215, 167}
\definecolor{figureBlue}{RGB}{127, 183, 223}
\newcommand{\legsq}[1]{\textcolor{#1}{\rule{1.5ex}{1.5ex}}}

\begin{document}

\preprint{FERMILAB-PUB-26-0320-SQMS-T}

\title{Entanglement Requirements for Coherent Enhancement in Detectors}
\date{\today}

\author{Zachary Bogorad\,\orcidlink{0000-0001-9913-6474}\,}
\email{zbogorad@fnal.gov}
\affiliation{Fermi National Accelerator Laboratory, Batavia, IL 60510, USA}

\author{Roni Harnik\,\orcidlink{0000-0001-7293-7175}\,}
\email{roni@fnal.gov}
\affiliation{Fermi National Accelerator Laboratory, Batavia, IL 60510, USA}

\begin{abstract}
    \noindent Coherent enhancement is a powerful mechanism for improving the sensitivity of a wide range of detectors, but its practical use is often limited by the difficulty of preparing the required quantum states. We show that this difficulty has a fundamental origin: coherent enhancement of a signal interacting with a detector is quantitatively constrained by entanglement. We prove general bounds on how the strength of coherent effects can scale with system size, as a function of the single-mode entanglement entropy of the detector. These bounds smoothly interpolate between the incoherent and fully coherent regimes, and apply both to parameter-estimation problems and to scattering processes. We discuss these results from two complementary perspectives: First, they appear as bounds on the quantum Fisher information of many-body states, which translate directly into limits on parameter sensitivity via the quantum Cram{\'e}r-Rao bound. Second, they can be interpreted as limits on a class of scattering cross sections, leading to predictions for how minimum detectable interaction strengths scale with target size. Together, these results provide a unified view of coherent enhancement in metrology and scattering experiments, and motivate the development of new techniques for generating entangled detector states.
\end{abstract}

\maketitle


Coherent enhancement—the constructive addition of amplitudes from many constituents of a detector or target—underlies some of the most powerful scaling advantages in quantum physics. Well-known examples include Heisenberg-limited parameter estimation in quantum metrology \cite{Giovannetti:2011chh, InterferometerCoherenceReview, Toth:2014msl, Degen:2016pxo}, collective emission phenomena \cite{Gross:1982dkt, Masson:2021eyk, Asenjo-Garcia:2017zjf, Andreev_1980}, and coherent scattering from, or absorption by, composite targets \cite{Freedman:1973yd, Akimov:2017ade, Baxter:2019mcx, Jeffries2021, Sivia2011, Chen:2023swh, Ito:2023zhp, GHZScattering, Bogorad:2023zmy}. In each of these settings, coherence can in principle convert an incoherent linear scaling with system size into a quadratic one. At the same time, it is broadly appreciated that such enhancements are difficult to realize in practice, and are often associated with the preparation of highly nonclassical many-body states.

In this work, we show that coherent enhancement is quantitatively limited by entanglement. We derive general bounds on collective responses generated by sums of local operators, demonstrating that the coherent contribution is controlled by the correlations present in the underlying many-body state. For pure states, these bounds can be expressed in terms of the average single-subsystem von Neumann entropy, yielding a simple interpolation between incoherent $O(N)$ scaling and maximally coherent $O(N^2)$ scaling; mixed states admit similar bounds via an entanglement of formation-type quantity \cite{Bennett:1996gf, Wootters:1997id, Horodecki:2009zz, Plenio:2007zz}. States with intermediate entanglement exhibit intermediate scaling, placing universal limits on how much coherence can be extracted from partially entangled detectors or targets. The analysis is largely model-independent and relies only on the structure of the detector or target Hilbert space and the entanglement properties of its quantum state. 

Our results provide a unified information-theoretic framework for understanding the limits of coherent enhancement across quantum sensing, metrology, and scattering experiments, and clarify the entanglement resources required to surpass incoherent scaling. They refine the relationship between multipartite entanglement and enhanced quantum Fisher information, extending  the Cram{\'e}r-Rao bound \cite{InterferometerCoherenceReview, Paris:2008zgg, Liu:2019xfr, Helstrom:1969fri} and the familiar distinction between the standard quantum limit and Heisenberg-limited scaling to a broader class of states. The analogous scattering bounds constrain coherent contributions to inelastic transition probabilities and thereby limit how detectable interaction strengths can scale with target size. This perspective is particularly relevant for particle physics experiments that rely on coherent scattering to enhance sensitivity to weakly coupled probes \cite{Freedman:1973yd, Akimov:2017ade, Baxter:2019mcx, Anderson:2011bi, Dutta:2024kuj, DeRomeri:2023dko, SuperradiantDetection1, SuperradiantDetection2, SuperradiantDetection3, CoherentDecoherence1, CoherentDecoherence2, Du:2022ceh, Brady:2022bus, Brady:2022qne}, but the framework itself is independent of any specific physical realization.

The remainder of this work is organized as follows: We first briefly discuss a toy example motivating a connection between coherence and entanglement. The main text then asserts simplified versions of our key results, omitting their proofs (which are deferred to Appendix~\ref{app:TechnicalDetails}). This allows us to present our core ideas while minimizing burdensome notation. Finally, we discuss a series of examples illustrating the various behaviors---incoherent, coherent, and in between---of our results in both the scattering and quantum sensing contexts. The Supplemental Materials include a table of symbols, our generalized results with proofs, some additional technical details, and further discussion of the intuition underlying our results.

\begin{figure*}[t]
    \subfloat{%
      \includegraphics[width=0.25\textwidth]{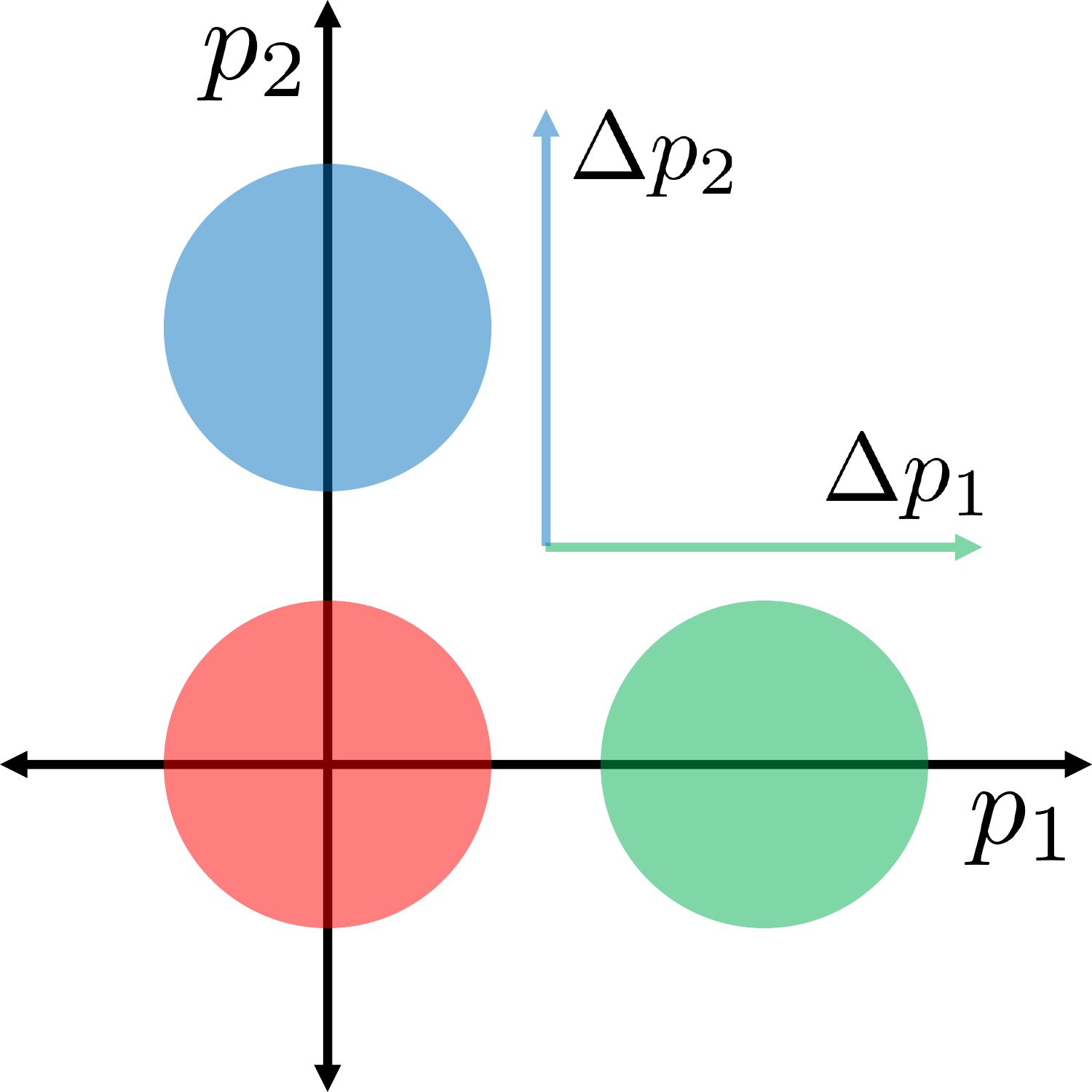}
    }
    \hspace{5mm} 
    \subfloat{%
      \includegraphics[width=0.25\textwidth]{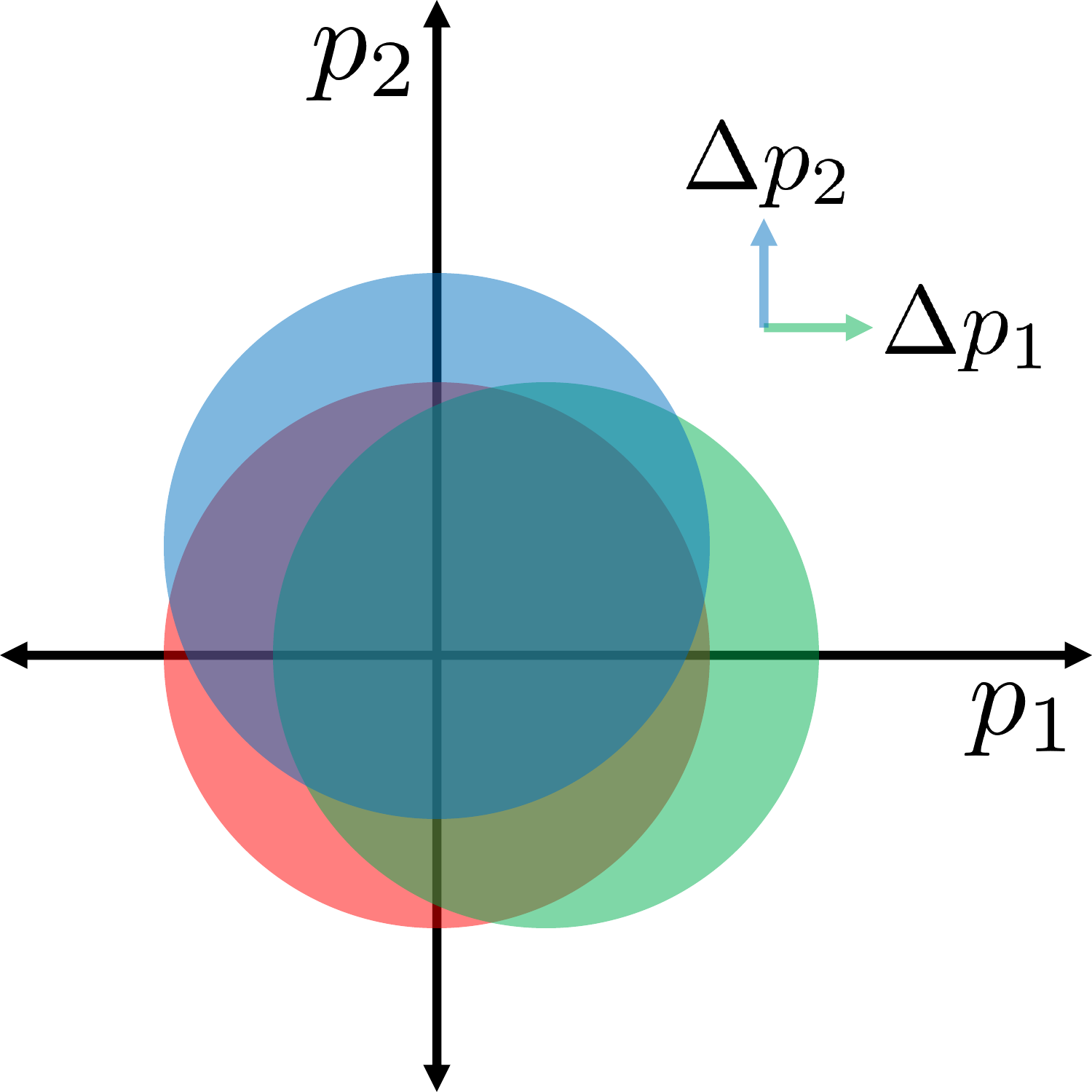}
    }
    \hspace{5mm} 
    \subfloat{%
      \includegraphics[width=0.25\textwidth]{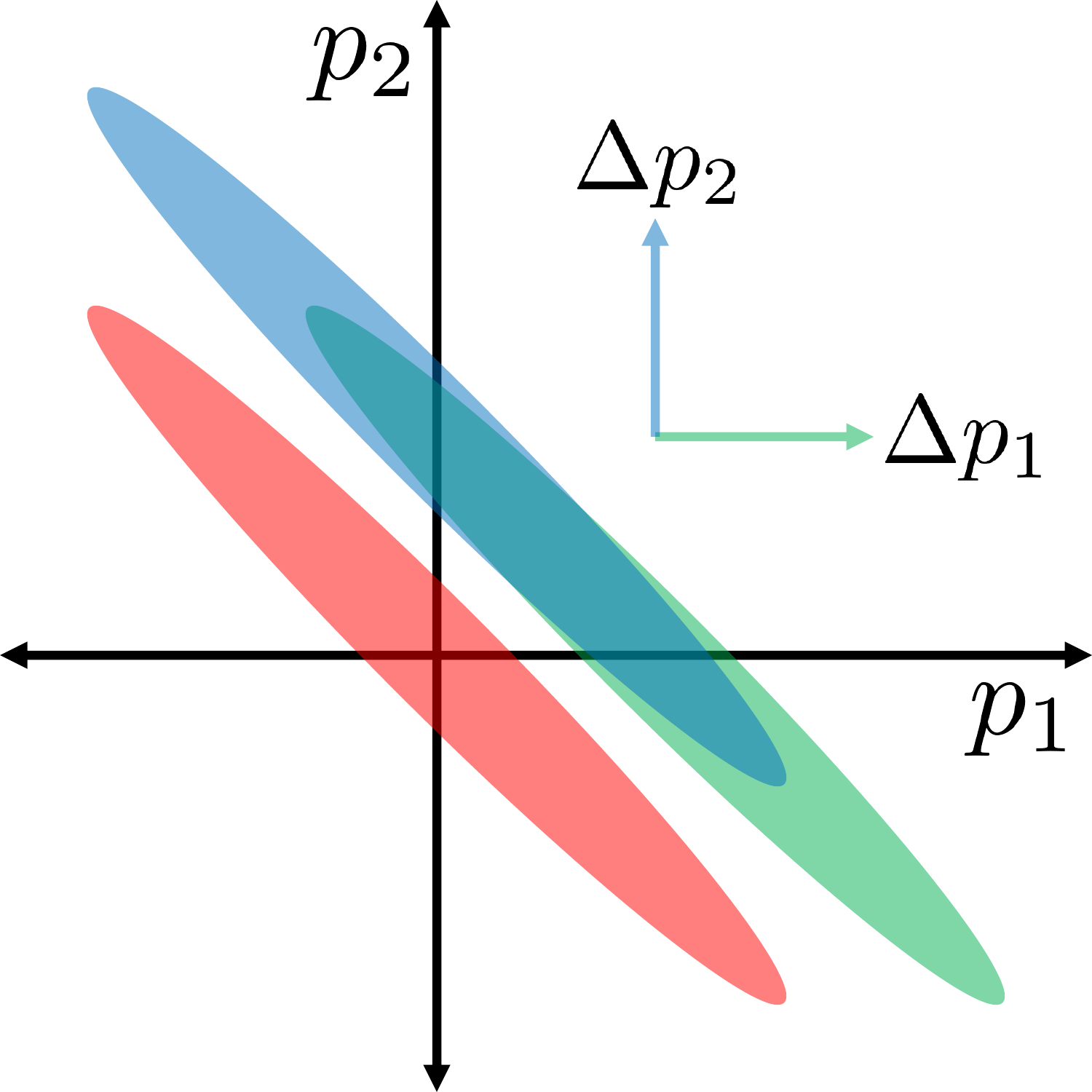}
    }
    \caption{Three examples of initial and final target particle state momentum distributions for a 1D scattering process at a fixed momentum transfer from two target particles; see text. In each plot, the red region shows the initial target state while the green and blue regions show the final state contributions from scattering off of the first or second particle, respectively; the true final state will thus be a superposition of the green and blue regions, assuming no decoherence or relative phase. These illustrate, from left to right: incoherent but easily detectable, coherent but poorly detectable, and coherent detectable scattering.}
    \label{fig:MotivatingExample}
\end{figure*}

\textbf{Motivating Example} --- We begin by presenting an illustrative, qualitative example of why one should expect coherent enhancements in detectors to be related to entanglement at all\footnote{We discuss this example, and a number of others, in further detail in Appendix~\ref{app:IntuitivePictures}.}. Consider a scattering process in which an incident particle scatters from a target\footnote{We emphasize that we use the term ``target'' in the direct detection sense---where a change in the target state is the basis for signal readout---and not in the fixed-target experiment sense, where detection happens separately from the target. Our results can apply to fixed-target experiments, but they apply to their detectors, not their ``targets.''} consisting of two particles, all treated in 1D, and choose a particular momentum transfer.

There are then two possible scattering processes at leading order: the momentum can be transferred to the first particle, or to the second. These two possibilities would lead, individually, to distinct final states. However, if there is no relative phase between these two outcomes---because the spatial separation between the two particles is negligible, for example---the true final state of this scattering process will instead be a superposition of the two states. Coherent enhancement can be understood as constructive interference between them.

Coherence is not all we desire in a detector, however: the final state of the scattering process must be significantly different from the initial state, in order for us to be able to see that a scattering event happened.

Now consider three possible initial states, shown in red in the panels of Figure~\ref{fig:MotivatingExample}. If the initial momentum distributions of the two target particles are independent, and both have uncertainties small compared to the momentum transfer (the left panel), the two final state contributions occupy disjoint or nearly disjoint regions of momentum space, and thus constructive interference is absent. The final states are also disjoint from the initial state, however. This process is therefore detectable, but not coherently enhanced.

The central panel considers the opposite limit of independent distributions: large momentum uncertainties compared to the momentum transfer. These result in initial and final states that all heavily overlap; although this can lead to a coherent enhancement in the cross section, the states are difficult to distinguish, reducing the effectiveness of this initial state (although, as we will show, detectors in this regime can still perform parametrically better than those in the previous one).

To achieve simultaneous coherence and high detectability, one thus needs the initial state to have correlated momenta; in particular, we require entanglement. We demonstrate how this can result in superior sensitivity in the right panel, which shows how an appropriately-entangled state (e.g. a bound state) can give final state contributions that overlap with each other but not with the initial state.

The remainder of this work is devoted to quantifying this requirement, and in particular to smoothly connecting the unentangled incoherent regime to the entangled coherent one.

\textbf{Mathematical Structure} --- The results of this paper are all consequences of the following linear algebra statement. Despite its technical appearance, we show in the remainder of this work that it is immediately applicable to a wide range of physically-interesting scenarios, bounding the sizes of observable coherent effects via the entanglement entropy of the components of a detector or experiment.

This central result involves four objects that will appear repeatedly in this work: a Hilbert space
\begin{align}
    \mathcal{H} = \bigotimes\limits_{k=1}^N \mathcal{H}_k \label{eq:Hdefinition}
\end{align}
where each $\mathcal{H}_k$ is itself a finite-dimensional\footnote{Although many physical systems to which our results could be applied are nominally infinite-dimensional, this can often be remedied by IR and UV cutoffs---most typically a finite spatial extent and an energy cutoff.} Hilbert space; an operator
\begin{align}
    G = \sum\limits_{k=1}^N G_k \bigotimes\limits_{\ell \neq k} I_\ell \label{eq:Gdefinition}
\end{align}
where each $G_k$ is a generic linear operator (not necessarily Hermitian) on $\mathcal{H}_k$ and $I_\ell$ is the identity on $\mathcal{H}_\ell$; a density matrix $\rho$ on $\mathcal{H}$, which will later correspond to the initial states of targets and detectors; and a von Neumann entropy for each individual subspace (e.g. for individual particles or modes within the detector)
\begin{align}
    s_k = -\Tr\left[ \rho_k \ln \rho_k \right] \label{eq:skDefinition}
\end{align}
where $\rho_k$ is the reduced density matrix on $\mathcal{H}_k$ alone (i.e. after tracing over everything except $\mathcal{H}_k$). For pure states~$\rho$, $s_k$ is a measure of the entanglement of the $k$'th particle or mode with the rest of~$\rho$. This is no longer true when~$\rho$ is a mixed state, in which case one must employ more involved measures of~$\rho$'s entanglement; we discuss this in Appendix~\ref{app:TechnicalDetails}.

We make two simplifying assumptions in the main text: First, we assume that every $\mathcal{H}_k$, every $G_k$ (as well as the corresponding operators in later theorems, i.e. $H_k$ and $T_k$), and every $s_k$ is identical. Second, we assume that all states $\rho$ are pure\footnote{This is not strictly necessary---all of our results still hold for mixed states---but mixed states can have non-zero $s_k$ even in the absence of internal entanglement, due to entanglement with the environment. The results in Appendix~\ref{app:TechnicalDetails}, which exclude this environmental contribution, are therefore both stronger and more physically meaningful for mixed states.}. This renders the main text results shorter and more readable, while making only an order-one difference for typical applications of our work, where most states are likely to be pure and each of these objects is likely to have little or no variation with $k$. We emphasize however that neither of these assumptions is required; accounting for each is simply a matter of taking appropriate averages, which we do for each of our results in Appendix~\ref{app:TechnicalDetails} (e.g. Lemma~\ref{lem:VarianceBound} below is generalized as Lemma~\ref{lem:VarianceBoundFull} in that appendix).

We also caution that, although our results all formally hold for both distinguishable and indistinguishable particles, as well as both bosons and fermions, their interpretation for indistinguishable bosons can become somewhat unintuitive; we discuss this in Appendix~\ref{app:IndistinguishableBosons}. In particular, for collections of photons (e.g. in electromagnetic cavities), the forms above typically hold with $N$ the number of distinct modes, rather than the number of photons.

\begin{lemma}
    For any Hilbert space $\mathcal{H}$ and operator $G$ of the forms in Eqs.~\eqref{eq:Hdefinition} and~\eqref{eq:Gdefinition}, and any state $\rho$ on $\mathcal{H}$, the variance of $G$ is bounded by
    \begin{align}
    \begin{split}
        & {\rm Var}(G) \equiv \Tr[\rho G^\dagger G] - \left| \Tr[\rho G] \right|^2 \\
        &~ \leq ||G_1||_\infty^2 \left( N + 2 N^2 \sqrt{s_1} \right),
    \end{split}
    \end{align}
    with $||G_1||_\infty$ the Schatten $\infty$-norm, i.e. the largest eigenvalue of $\sqrt{G_1^\dagger G_1}$.\footnote{Here, as in the remainder of the main text, we single out $G_1$ (and $s_1$, etc.) because we are assuming $k$-independent $G_k$, etc.; the generalized results are in Appendix \ref{app:TechnicalDetails}.}
    \label{lem:VarianceBound}
\end{lemma}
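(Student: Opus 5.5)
Expanding the variance into single-site and pairwise terms is the natural first step:
\begin{align}
{\rm Var}(G) = \sum_{k} {\rm Var}(G_k) + \sum_{k\neq \ell} \Big( \Tr[\rho\, G_k^\dagger G_\ell] - \overline{\Tr[\rho G_k]}\,\Tr[\rho G_\ell] \Big).
\end{align}
Each diagonal term obeys ${\rm Var}(G_k) \le \Tr[\rho_k G_k^\dagger G_k] \le ||G_1||_\infty^2$, so the diagonal part contributes at most $N||G_1||_\infty^2$. This accounts for the incoherent $N$ in the bound. The remaining task is to show that each of the $N(N-1)$ off-diagonal covariances is at most $2\sqrt{s_1}\,||G_1||_\infty^2$.

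For a fixed pair $k\neq\ell$, the covariance is $\Tr[(\rho_{k\ell} - \rho_k\otimes\rho_\ell)(G_k^\dagger\otimes G_\ell)]$. Hölder's inequality bounds it by $||\rho_{k\ell}-\rho_k\otimes\rho_\ell||_1\,||G_k^\dagger||_\infty ||G_\ell||_\infty$. Pinsker's inequality then gives
\begin{align}
||\rho_{k\ell}-\rho_k\otimes\rho_\ell||_1 \le \sqrt{2\, I(k:\ell)},
\end{align}
where $I(k:\ell) = S(\rho_k)+S(\rho_\ell)-S(\rho_{k\ell}) = D(\rho_{k\ell}\,||\,\rho_k\otimes\rho_\ell)$ is the quantum mutual information.

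The key step, and the one I expect to need care, is to bound the mutual information by the single-site entropies: $I(k:\ell)\le 2\min(s_k,s_\ell)$. This follows from the Araki–Lieb inequality $S(\rho_{k\ell}) \ge |s_k - s_\ell|$. Substituting, $I \le s_k+s_\ell-|s_k-s_\ell| = 2\min(s_k,s_\ell)$. With identical $s_k=s_1$ this gives $I(k:\ell)\le 2s_1$, so the trace norm is at most $2\sqrt{s_1}$ and each covariance is at most $2\sqrt{s_1}||G_1||_\infty^2$. Summing over the $N(N-1)\le N^2$ ordered pairs produces the $2N^2\sqrt{s_1}$ term, which completes the bound.

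If the constants fail to line up under a different Pinsker convention (for example, one with a $\ln 2$ factor or stated for the trace distance $\tfrac12||\cdot||_1$), there is a fallback. The covariance only sees $\tfrac12||\cdot||_1$ up to a factor of 2, so the form above still yields exactly $2\sqrt{s_1}$ when natural logarithms are used, matching Eq.~\eqref{eq:skDefinition}.

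Purity of $\rho$ is not used anywhere in this argument, which is consistent with the paper's remark that the result holds for mixed states as well. The mixed-state refinement in Appendix~\ref{app:TechnicalDetails} would presumably come from applying the same argument to each component of a pure-state decomposition and using concavity of the square root.
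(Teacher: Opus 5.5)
Your proof is correct and follows the paper's argument essentially step for step: the pairwise expansion, H\"{o}lder's inequality, quantum Pinsker, and then Araki--Lieb to get $D(\rho_{k\ell}\,\|\,\rho_k\otimes\rho_\ell)\le 2\min(s_k,s_\ell)$, summed over the $N(N-1)\le N^2$ ordered pairs (the paper's extra Cauchy--Schwarz and concavity steps only matter when the $G_k$ and $s_k$ differ). One caveat on your closing aside: the $\EMF$ refinement cannot be carried out for the variance itself, because the variance is concave rather than convex under mixing (e.g.\ an equal mixture of all-up and all-down product states has ${\rm Var}(J_z)=N^2/4$ with $\EMF=0$), which is why the paper introduces it only for the QFI and the trace distance, both of which are convex.
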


Two limits of this result are easily understandable: when there is no entanglement between subspaces ($s_\avg = 0$), there can be no covariance between distinct $\mathcal{H}_k$ and thus the total variance is necessarily linear in $N$. A sufficiently entangled system, however, can of course regain the generic $N^2$ scaling of variance. Likewise, the prefactor of $||G_1||_\infty^2$ is straightforwardly interpreted as reflecting the maximum achievable variances of the individual operators. The non-trivial implication of Lemma~\ref{lem:VarianceBound} is the scaling of the intermediate regime, where the maximum variance scales as the square root of the average single-subsystem von Neumann entropy. 

Lemma~\ref{lem:VarianceBound} is, however, only an upper bound, and in fact we are not aware of any systems exceeding $O(N + N^2 s_1)$ rather than $O(N + N^2\sqrt{s_1})$; see the examples below. We leave proving or disproving such an improved bound to future work, however. Our bound becomes particularly loose when systems are highly inhomogeneous, e.g. due to entanglement that is local (between each subsystem and only a few others) rather than widely distributed; we discuss this further in Appendix~\ref{subapp:VarianceProof}.

We now turn to the physics applications of this result, beginning with a bound on quantum parameter estimation and then turning to a bound on scattering sensitivity. Although closely related, these physical results will apply to different classes of experiments, with different resulting scaling behavior.

\textbf{Quantum Fisher Information and the Cram{\'e}r-Rao Bound} --- The most immediate application of Lemma \ref{lem:VarianceBound} is to the quantum Fisher information (QFI), which quantifies how rapidly a state changes under a particular Hamiltonian. Since the QFI is upper bounded by 4 times the variance of the Hamiltonian (with equality for pure states) \cite{QFIReview}, we can apply Lemma~\ref{lem:VarianceBound} and rearrange the result to show the following:

\begin{lemma}
    The QFI of a state $\rho$, given any Hamiltonian of the form in Eq.~\eqref{eq:Gdefinition}, is bounded by
    \begin{align}
        \frac{ F_Q[\rho, H] }{F_Q^\MAX[H_1]} \leq N + 2N^2\sqrt{s_1}
    \end{align}
    where $F_Q^\MAX[H_1]$ is the maximum QFI that could be achieved with any state on $\mathcal{H}_1$ alone.
    \label{lem:QFIBound}
\end{lemma}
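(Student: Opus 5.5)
We will deduce Lemma~\ref{lem:QFIBound} from Lemma~\ref{lem:VarianceBound}, using the standard inequality $F_Q[\rho,H] \le 4\,{\rm Var}_\rho(H)$, which holds with equality for pure states~\cite{QFIReview}. The only extra input needed is an identification of the normalization $F_Q^\MAX[H_1]$ with a Schatten norm.

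The first step is to notice that Lemma~\ref{lem:VarianceBound} is not invariant under shifts $H_k \to H_k - c_k I_k$. Such a shift leaves both $F_Q$ and ${\rm Var}(H)$ unchanged, but it changes $||H_1||_\infty$. We will therefore apply the lemma to the shifted generator $\tilde H = \sum_k \tilde H_k$, where $\tilde H_k = H_k - \tfrac{1}{2}(\lambda_{\max}+\lambda_{\min}) I_k$ and $\lambda_{\max},\lambda_{\min}$ are the extreme eigenvalues of the Hermitian operator $H_1$. Since all $H_k$ are identical, the same shift is used for every $k$, and the assumptions of the lemma still hold. This shift minimizes the operator norm, giving $||\tilde H_1||_\infty = \tfrac{1}{2}(\lambda_{\max}-\lambda_{\min})$. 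The reduced states $\rho_k$ are untouched, so the entropies $s_k$ are unchanged.

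The second step is to evaluate $F_Q^\MAX[H_1]$. On $\mathcal{H}_1$ alone the QFI is convex in the state, so its maximum is attained on pure states, where $F_Q = 4\,{\rm Var}(H_1)$. By Popoviciu's inequality the variance of any state is at most $\tfrac14(\lambda_{\max}-\lambda_{\min})^2$. The bound is saturated by the equal superposition of the two extremal eigenvectors, so
\begin{align}
    F_Q^\MAX[H_1] = (\lambda_{\max}-\lambda_{\min})^2 = 4\,||\tilde H_1||_\infty^2 .
\end{align}

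Combining the two steps gives
\begin{align}
    F_Q[\rho,H] \le 4\,{\rm Var}(\tilde H) \le 4\,||\tilde H_1||_\infty^2\left(N + 2N^2\sqrt{s_1}\right) = F_Q^\MAX[H_1]\left(N+2N^2\sqrt{s_1}\right).
\end{align}
Dividing by $F_Q^\MAX[H_1]$ yields the stated inequality.

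The only point needing care is this shift. Applying Lemma~\ref{lem:VarianceBound} directly to $H$ would give a bound in terms of $||H_1||_\infty^2$, which is generally larger than $\tfrac14 F_Q^\MAX[H_1]$ and could even be arbitrarily large relative to it. The shift makes the bound exact in the sense that it matches the single-mode maximum. Beyond that, the argument only needs the standard facts that $F_Q \le 4\,{\rm Var}$ for mixed states and that the maximum single-mode QFI is attained by a pure state.
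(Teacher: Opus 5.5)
Your argument is correct and is essentially the paper's proof. It shifts each $H_k$ by the midpoint of its extreme eigenvalues so that $F_Q^\MAX[H_k] = 4||H_k'||_\infty^2$ (saturated by the equal superposition of the extremal eigenvectors), applies the variance lemma to the shifted generator, and uses that the QFI and variance are unchanged by identity shifts; the paper's appendix version goes further, averaging over non-identical $H_k$ and strengthening $s_k$ to $\EMF$ via convexity of the QFI.
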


This can immediately be substituted into the quantum Cram{\'e}r-Rao bound, a well-known limit on the best possible estimate of a Hamiltonian parameter using a given state as a function of the quantum Fisher information (see e.g. Ref.~\cite{QFIReview}). This allows us to bound how well such a measurement can scale with system size\footnote{See also Ref.~\cite{Gorecki:2024rqd} for other bounds based on the QFI.}:

\begin{theorem}
     Given the unitary evolution of a state $\rho$ to $e^{-i\theta H}\rho e^{i\theta H}$, the best precision with which one can determine $\theta$ in $m$ measurements is
    \begin{align}
        (\Delta \theta)^2 \geq \frac{1}{m F_Q^\MAX[H_1] \left(N + 2N^2\sqrt{s_1}\right)}. \label{eq:NParticleCR}
    \end{align}
    \label{thm:CramerRaoEntropyDensity}
\end{theorem}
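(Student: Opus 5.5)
The plan is to combine the standard quantum Cram\'er-Rao bound with Lemma~\ref{lem:QFIBound}. For unitary encoding $\rho_\theta = e^{-i\theta H}\rho e^{i\theta H}$, the quantum Cram\'er-Rao bound states that any unbiased estimator of $\theta$ built from $m$ independent repetitions, using arbitrary (possibly collective) measurements on each copy, satisfies
\begin{align}
    (\Delta\theta)^2 \geq \frac{1}{m F_Q[\rho, H]} .
\end{align}
Here $F_Q[\rho,H]$ is the QFI of the family $\rho_\theta$. Because the generator is $\theta$-independent, it equals the QFI of $\rho$ with respect to $H$ at every $\theta$. Additivity of the QFI over the $m$ product copies supplies the factor of $m$. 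I take this as the known result cited from Ref.~\cite{QFIReview}.

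The second step is to apply Lemma~\ref{lem:QFIBound} to the Hamiltonian $H=\sum_k H_k\bigotimes_{\ell\neq k} I_\ell$, which has exactly the form of Eq.~\eqref{eq:Gdefinition} with identical $H_k$. The lemma gives $F_Q[\rho,H]\leq F_Q^\MAX[H_1]\left(N+2N^2\sqrt{s_1}\right)$. Since the right-hand side of the Cram\'er-Rao inequality is monotonically decreasing in $F_Q$, substituting this upper bound yields Eq.~\eqref{eq:NParticleCR} directly.

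It is worth recording where Lemma~\ref{lem:QFIBound} comes from, in case it must be re-derived. Start from $F_Q[\rho,H]\leq 4\,\Var(H)$, with equality for pure states. Apply Lemma~\ref{lem:VarianceBound} with $G=H$ Hermitian. Finally, use that $F_Q^\MAX[H_1]=(\lambda_{\max}-\lambda_{\min})^2$, attained by an equal superposition of the extremal eigenvectors of $H_1$.

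The one genuinely delicate point is matching the normalization $\|H_1\|_\infty^2$ to $F_Q^\MAX[H_1]/4 = \left((\lambda_{\max}-\lambda_{\min})/2\right)^2$. The fix is that $e^{-i\theta H}$ is unchanged up to a global phase when each $H_k$ is shifted by a multiple of the identity. The QFI and the variance are likewise invariant under such a shift, and so is $s_1$. I would therefore shift $H_1\to H_1-\tfrac{\lambda_{\max}+\lambda_{\min}}{2}I$ before invoking Lemma~\ref{lem:VarianceBound}. This makes $4\|H_1\|_\infty^2=F_Q^\MAX[H_1]$ exactly. Beyond this, the proof is a direct chaining of inequalities, so I expect no real obstacle.
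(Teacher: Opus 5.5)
Your proposal is correct and follows the paper's proof: the paper likewise obtains the theorem by substituting the QFI bound of Lemma~\ref{lem:QFIBound} into the quantum Cram\'er-Rao bound. It proves that lemma as you sketch, via $F_Q\leq 4\Var(H)$, Lemma~\ref{lem:VarianceBound}, and the identity shift $H_k\to H_k-\tfrac{1}{2}(\lambda_{\max}+\lambda_{\min})I_k$ that makes $4\|H_k'\|_\infty^2=F_Q^\MAX[H_k]$.
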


Omitting the $N+2N^2\sqrt{s_1}$ factor, Eq.~\eqref{eq:NParticleCR} would simply be Cram{\'e}r-Rao bound for the best possible state of a single subsystem $\mathcal{H}_1$. The $N$-dependent factor then gives the best possible improvement on that bound as the detector grows to $N$ particles or modes.

Theorem~\ref{thm:CramerRaoEntropyDensity} is, in effect, a quantitative formalization of the standard intuition that entanglement is necessary to beat the standard quantum limit. It is well known that, in the absence of any entanglement, the error on $\theta$ can scale at best as $N^{-1/2}$, i.e. incoherently (although the definition of $N$ here can be subtle; see Appendix~\ref{app:IndistinguishableBosons}). Conversely, fully coherent improvement of the error to $O(N^{-1})$ requires non-zero, $N$-independent entropy density\footnote{Analogous results are also known for other quantities, e.g. the power absorbed in a cavity, which can grow super-extensively only in the presence of correlations between modes \cite{Lasenby:2019hfz}.}. Theorem~\ref{thm:CramerRaoEntropyDensity} smoothly connects these limits, giving a minimum entropy scaling required to achieve a particular scaling of sensitivity with $N$.

\textbf{Scattering} --- Translating Lemma~\ref{lem:VarianceBound} into bounds on scattering processes is only slightly more involved. We begin by considering a single particle incident on and/or outgoing from an $N$-particle target, encompassing both scattering proper as well as emission and absorption processes; we refer to all three as ``scattering'' for brevity. 

We are interested in scattering processes which, at leading order, correspond to an interaction between the incident/outgoing particle and a single target particle. We can then write the S-matrix as
\begin{align}
    S = 1 + i\alpha \sum\limits_{k=1}^N T_k^{(1)} + O(\alpha^2), \label{eq:SMatrixSeparation}
\end{align}
where $\alpha$ is some perturbative expansion parameter and each $T_k^{(1)}$ acts only on the scattered particle and on $\mathcal{H}_k$ (i.e. it acts as the identity on every other $\mathcal{H}_\ell$).  Note that, for a coupling $g$ appearing in the Lagrangian, $\alpha$ may be order $g$ (for absorption/emission processes), order $g^2$ (for scattering processes), or even other combinations of one or more couplings. $\alpha$ should always be understood as defined by Eq.~\eqref{eq:SMatrixSeparation}.

We begin by bounding a quantity we will call the ``orthogonal scattering probability,'' $p_\perp$: the scattering probability\footnote{For emission processes, this must necessarily be understood as a scattering probability per unit time, with the appropriate dimensionful parameters put in. This does not substantively change our results, however.} restricted to final states of the target orthogonal to the initial target state\footnote{One could of course rewrite our bounds in terms of scattering cross sections rather than probabilities if preferred, but this would prove to be less convenient below.}. For an initial (pure) target state $\ket{t}$, this is
\begin{align}
    p_\perp = \bra{i}\bra{t} S^\dagger \Pi_\perp S \ket{i}\ket{t} \label{eq:pPerpDefinition}
\end{align}
where $\ket{i}$ is the state of the incident particle and $\Pi_\perp = 1-\ket{t}\bra{t}$ is the projector onto the space orthogonal to $\ket{t}$. We will sometimes further restrict to the probability for a particular outgoing state $\ket{f}$, $p_\perp^{i \to f}$, simply by multiplying the projector by $\ket{f}\bra{f}$.

The orthogonal scattering probability has a convenient property: it is determined to order $\alpha^2$ by only the order-$\alpha$ terms in $S$. To see this, note that, on substituting Eq.~\eqref{eq:SMatrixSeparation} into Eq.~\eqref{eq:pPerpDefinition}, the $O(\alpha^2) \times O(1)$ cross-terms are eliminated by $\Pi_\perp$, leaving
\begin{align}
    p_\perp = \alpha^2 \sum\limits_{k, \ell} \bra{i}\bra{t} T_k^{(1)\dagger} \Pi_\perp T_\ell^{(1)} \ket{i}\ket{t} + O(\alpha^3). \label{eq:pPerpSimplified}
\end{align}
We will thus be able to study observables at order $\alpha^2$ below without specifying the order-$\alpha^2$ behavior of $S$.

\begin{lemma}
    The total orthogonal scattering probability (satisfying some technical conditions given in Appendix~\ref{subapp:ScatteringProof}) from an $N$-particle/mode pure target state $\rho$ at order $\alpha^2$ is upper bounded by
    \begin{align}
        p_\perp &\leq p_{\perp,1}^{\rm max} \left( N + 2 N^2 \sqrt{s_1} \right), \label{eq:ScatteringProbBound}
    \end{align}
    where $p_{\perp,k}^{\rm max}$ is the maximum orthogonal scattering probability from any state of the $k$'th particle/mode alone.
    \label{lem:OrthogonalScatterProbTot}
\end{lemma}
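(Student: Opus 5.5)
The plan is to reduce $p_\perp$ at order $\alpha^2$ to a variance of a sum of local operators and then apply Lemma~\ref{lem:VarianceBound}. Starting from Eq.~\eqref{eq:pPerpSimplified}, I define $T = \sum_k T_k^{(1)}$. With $\Pi_\perp = 1 - \ket{t}\bra{t}$ acting on the target only, we get
\begin{align}
    p_\perp/\alpha^2 = \bra{i}\bra{t} T^\dagger T \ket{i}\ket{t} - \bra{i}\bra{t} T^\dagger \left(1\otimes\ket{t}\bra{t}\right) T \ket{i}\ket{t}.
\end{align}
The second term is $\nnorm{\bra{t} T \ket{t}\ket{i}}^2$, a norm in the incident-particle space. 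I would then decompose $T_k^{(1)}$ in a basis of the incident/outgoing particle's Hilbert space, writing $T_k^{(1)} = \sum_{f} \ket{f}\bra{i}\otimes G_k^{(f)} + (\text{terms not acting on } \ket{i})$. Here $G_k^{(f)} \equiv \bra{f} T_k^{(1)} \ket{i}$ is an operator on $\mathcal{H}_k$ alone. This gives $p_\perp = \alpha^2 \sum_f \Var_t(G^{(f)})$, where $G^{(f)} = \sum_k G_k^{(f)}$ has exactly the form of Eq.~\eqref{eq:Gdefinition}. So each fixed-final-state piece $p_\perp^{i\to f}$ is literally the variance in Lemma~\ref{lem:VarianceBound}.

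The naive next step is to apply Lemma~\ref{lem:VarianceBound} to each $f$ and sum. That would give $\sum_f \nnorm{G_1^{(f)}}_\infty^2 (N + 2N^2\sqrt{s_1})$, which is not the desired prefactor, since a sum of operator norms generally exceeds $p_{\perp,1}^{\max}$. The main obstacle is therefore handling the sum over final states without losing a factor. I would avoid the per-$f$ split and instead treat the outgoing particle as an ancilla. I would regard $\mathcal{T}_k \equiv T_k^{(1)}\ket{i}$ as a map $\mathcal{H}_k \to \mathcal{H}_{\rm out}\otimes\mathcal{H}_k$ and redo the proof of Lemma~\ref{lem:VarianceBound} with vector-valued (Hilbert-space-valued) operators. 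Doing so presumably requires only that the lemma's proof use the structure $\Var = \sum_{k,\ell}\mathrm{Cov}(G_k,G_\ell)$, with diagonal terms bounded by single-site quantities and off-diagonal covariances bounded via the distance of $\rho_{k\ell}$ from $\rho_k\otimes\rho_\ell$ (Pinsker plus mutual information $\le 2 s$). In that argument the diagonal terms become $\bra{t}\mathcal{T}_k^\dagger \mathcal{T}_k\ket{t} - \nnorm{\bra{t}\mathcal{T}_k\ket{t}}^2$.

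The key refinement is to bound these single-site quantities, and the covariance prefactors, by $p_{\perp,1}^{\max}$ rather than by an operator norm. For the diagonal terms, I would note that the single-site variance evaluated on the reduced state $\rho_k$ is at most the maximum over mixed states on $\mathcal{H}_k$ of the single-particle orthogonal scattering probability. Concavity of the variance in the state, together with the pure-state maximum, should show that this equals $p_{\perp,k}^{\max}$. This is where the ``technical conditions'' of Appendix~\ref{subapp:ScatteringProof} presumably enter. For example, one may need that the diagonal $\bra{t}T\ket{t}$ component (forward/elastic part) can be shifted away without changing $p_\perp$, since $\Var$ is invariant under $G_k \to G_k - c_k$. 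One then chooses $c_k$ so that $\nnorm{G_k - c_k}_\infty^2$ is controlled by $p_{\perp,k}^{\max}$, e.g.\ by taking $c_k$ as the centre of the numerical range so that the operator norm relates to the maximal variance.

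Given such a shifted, centred operator, the off-diagonal covariance bound from Lemma~\ref{lem:VarianceBound} goes through with $\nnorm{G_1}_\infty^2$ replaced by (a constant times) $p_{\perp,1}^{\max}$. Collecting the $N$ diagonal terms and the $N^2$ cross terms then yields Eq.~\eqref{eq:ScatteringProbBound}. The delicate step is the constant: if the norm-versus-maximal-variance comparison loses a factor of $4$, the stated bound requires that the technical conditions guarantee it does not.
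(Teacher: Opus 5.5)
Your reduction $p_\perp=\alpha^2\sum_f\Var_t(G^{(f)})$ is correct, and so is your ancilla treatment of the outgoing particle. The latter is exactly the paper's Lemma~\ref{lem:VarianceBoundVector}: its row/column Cauchy--Schwarz step $\nnorm{\sum_f (G_k^{(f)})^\dagger G_\ell^{(f)}}_\infty\le D_kD_\ell$ is just $\nnorm{\mathcal{T}_k^\dagger\mathcal{T}_\ell}_\infty\le\nnorm{\mathcal{T}_k}_\infty\nnorm{\mathcal{T}_\ell}_\infty$.

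The gap is the step you yourself flag as delicate. To get the stated prefactor you need shifts $c_k^f$ satisfying $\nnorm{\sum_f(G_k^{(f)}-c_k^fI_k)^\dagger(G_k^{(f)}-c_k^fI_k)}_\infty\le p_{\perp,k}^{\rm max}/\alpha^2$, and nothing in your proposal produces them. Centring at the numerical range only controls the numerical radius $w(G_k-c_k)$. The general relation $\nnorm{X}_\infty\le 2w(X)$ then costs up to the factor of $4$ you anticipated, and for the $f$-summed operator there is not even a single numerical range to centre. The technical conditions of Appendix~\ref{subapp:ScatteringProof} ($p_\perp\ll1$, $p_{\perp,k}^{\rm max}\ll1$, and $N$-independence of $p_{\perp,k}^{\rm max}$) do not touch this constant. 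As written, you therefore prove the bound only up to an unspecified $O(1)$ factor.

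The missing idea is a minimax identity, the paper's Lemma~\ref{lem:MinimaxLemma} (a generalization of a theorem of Audenaert). Set $F(\rho_k,\{c^f\})=\sum_f\Tr[\rho_k(G_k^{(f)}-c^fI_k)^\dagger(G_k^{(f)}-c^fI_k)]$.
\begin{itemize}
\item For fixed $\{c^f\}$, $\max_{\rho_k}F$ is the operator norm above.
\item For fixed $\rho_k$, $\min_{\{c^f\}}F=\sum_f\Var_{\rho_k}(G_k^{(f)})$, attained at $c^f=\Tr[\rho_kG_k^{(f)}]$.
\item $F$ is affine in $\rho_k$ and convex in $\{c^f\}$, the density matrices form a compact convex set, and the $\{c^f\}$ can be confined to a compact ball.
\end{itemize}
Sion's minimax theorem then gives $\min_{\{c^f\}}\nnorm{\sum_f(G_k^{(f)}-c^fI_k)^\dagger(G_k^{(f)}-c^fI_k)}_\infty=\max_{\rho_k}\sum_f\Var_{\rho_k}(G_k^{(f)})=p_{\perp,k}^{\rm max}/\alpha^2$ exactly, with no loss. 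Feeding these optimal shifts into your vector-valued variance argument controls both the diagonal and the cross terms by $p_{\perp,k}^{\rm max}$, and Eq.~\eqref{eq:ScatteringProbBound} follows.

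Relatedly, your appeal to concavity to identify the mixed-state maximum with $p_{\perp,k}^{\rm max}$ points the wrong way. Concavity of the variance only says a mixture does at least as well as the average of its pure components, so it cannot show that mixed states are no better than pure ones. The paper sidesteps this by taking $p_{\perp,k}^{\rm max}=\alpha^2\max_{\rho_k}\sum_f\Var_{\rho_k}(T_k^{i\to f})$ over all density matrices, which is precisely the quantity the minimax identity produces.
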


Lemma~\ref{lem:OrthogonalScatterProbTot} bounds only the orthogonal scattering probability $p_\perp$, while some scattering experiments perform measurements over bases that include superpositions of the initial target state with other states (see e.g. Refs. \cite{CoherentDecoherence1, CoherentDecoherence2}). Nonetheless our bound suffices to constrain the sensitivity of \textit{any} scattering experiment: as we show in Lemmas~\ref{lem:TraceDistanceBound} and~\ref{lem:SingleParticleGeneralLimit}, $p_\perp$ bounds the trace distance between the final target states in the presence and absence of scattering, which in turn bounds the achievable statistical power of any positive operator-valued measure (POVM). Measurements in bases that include superpositions of the initial target state may yield large outcome probabilities even in the absence of entanglement, but the difference in those probabilities between the scattering and no-scattering hypotheses is controlled by the same trace distance bound.

One additional subtlety arises in making this generalization, however: a distinction arises between fully forward elastic scattering processes, in which the incident and outgoing particle states are exactly the same (e.g. in the Stodolsky effect~\cite{Stodolsky:1974aq}), and all other scattering processes. In particular, although the former can change the state of the target at $O(\alpha)$, the latter changes it only at $O(\alpha^2)$ after integrating out the unknown outgoing particle state; this will lead to parametrically different sensitivity in our later results. Note that fully forward elastic scattering is nothing more than the unitary evolution measured in a QFI/Cram{\'e}r-Rao context described in a different language---tracing out the incident/outgoing state gives an effective Hamiltonian for the target alone---and thus the $O(\alpha)$ results will precisely mirror Lemma~\ref{lem:QFIBound} and Theorem~\ref{thm:CramerRaoEntropyDensity}.

\begin{lemma}
    For any scattering process of the same form as in Lemma~\ref{lem:OrthogonalScatterProbTot}, the trace distance between the final states in the presence ($\rho_f'$) and absence ($\rho_f$) of the incident particle is upper bounded by
    \begin{align}
        \TrD{\rho_f'}{\rho_f} \leq \sqrt{p_\perp^{i\to i}} + \sqrt{p_\perp p_s} \label{eq:TraceDistanceBound}
    \end{align}
    where $p_\perp^{i\to i}$ is the contribution to $p_\perp$ from $\ket{f} = \ket{i}$ (i.e. fully forward elastic scattering of the incident particle) and $p_s$ is the total probability of scattering to an outgoing particle state different from the incident particle state\footnote{Note that $p_s$ is equivalent to $p_\perp$ but with the target and scattered particle labels swapped.}.
    \label{lem:TraceDistanceBound}
\end{lemma}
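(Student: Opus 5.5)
We write the full final state after scattering as $S\ket{i}\ket{t}$ and decompose the outgoing particle state into the component along $\ket{i}$ and the component orthogonal to it. Concretely, with $\Pi_i=\ket{i}\bra{i}$ acting on the scattered particle, set
\begin{align}
    S\ket{i}\ket{t} = \ket{i}\otimes\ket{\phi_0} + \ket{\Psi_s},
\end{align}
where $\ket{\phi_0}=\bra{i}S\ket{i}\ket{t}$ is an unnormalized target vector and $\ket{\Psi_s}=(1-\Pi_i)S\ket{i}\ket{t}$ has $\braket{\Psi_s}{\Psi_s}=p_s$. Tracing out the particle, which is orthogonal between the two pieces, gives
\begin{align}
    \rho_f' = \ket{\phi_0}\bra{\phi_0} + \sigma_s, \qquad \sigma_s = \Tr_{\rm particle}\left[\ket{\Psi_s}\bra{\Psi_s}\right],
\end{align}
with $\Tr\sigma_s=p_s$ and $\braket{\phi_0}{\phi_0}=1-p_s$. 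In the absence of the incident particle, $\rho_f=\ket{t}\bra{t}$; the free target evolution is either absorbed into the definition of $S$ or common to both hypotheses, and trace distance is invariant under it.

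Next we split each piece into components along $\ket{t}$ and orthogonal to it. Write $\ket{\phi_0}=a\ket{t}+\ket{\phi_\perp}$, where $\braket{\phi_\perp}{\phi_\perp}=p_\perp^{i\to i}$ by definition of the forward-elastic orthogonal probability. Similarly, $\sigma_s$ has an orthogonal block whose trace is the non-forward part of $p_\perp$, namely $p_\perp-p_\perp^{i\to i}\le p_\perp$, together with a $\ket{t}$-block and off-diagonal blocks. We then use the triangle inequality to separate the two contributions:
\begin{align}
    \TrD{\rho_f'}{\rho_f} \le \TrD{\ket{\hat\phi_0}\bra{\hat\phi_0}}{\ket{t}\bra{t}}\cdot(\ldots) + (\text{contribution of }\sigma_s).
\end{align}
A cleaner route, which we would try first, avoids this by purifying. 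The state $S\ket{i}\ket{t}$ purifies $\rho_f'$, and $\ket{i}\ket{t}$ purifies $\rho_f$, so by monotonicity of trace distance under partial trace,
\begin{align}
    \TrD{\rho_f'}{\rho_f} \le \sqrt{1-|\bra{i}\bra{t}S\ket{i}\ket{t}|^2} = \sqrt{1-|a|^2}.
\end{align}
However, this yields roughly $\sqrt{p_\perp^{i\to i}+p_s}$, which is too weak, since $p_s$ enters at $O(\alpha^2)$ without the product structure. The purification must therefore be applied only to the forward piece, and the $\sigma_s$ piece must be handled by fidelity with $\ket{t}$.

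The refined argument treats $\rho_f'$ as the mixture
\begin{align}
    \rho_f' = (1-p_s)\,\hat\rho_0 + p_s\,\hat\sigma_s
\end{align}
of normalized states, and bounds the distance of each to $\ket{t}$ through fidelity:
\begin{align}
    \TrD{\hat\rho}{\ket{t}\bra{t}} \le \sqrt{1-\bra{t}\hat\rho\ket{t}}.
\end{align}
For the forward piece this gives $\sqrt{p_\perp^{i\to i}/(1-p_s)}$. For the scattered piece it gives $\sqrt{(p_\perp-p_\perp^{i\to i})/p_s}$. By convexity, the contribution of the scattered piece is at most $p_s\sqrt{p_\perp/p_s}=\sqrt{p_\perp p_s}$. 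The forward piece contributes $(1-p_s)\sqrt{p_\perp^{i\to i}/(1-p_s)}\le\sqrt{p_\perp^{i\to i}}$. Summing the two contributions gives Eq.~\eqref{eq:TraceDistanceBound}.

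The main obstacle is this convexity step. Joint convexity of trace distance applies to mixtures of states compared against a common state, which is valid here because $\ket{t}\bra{t}=(1-p_s)\ket{t}\bra{t}+p_s\ket{t}\bra{t}$. The remaining care is to check that the orthogonal weights of the normalized pieces are exactly $p_\perp^{i\to i}/(1-p_s)$ and $(p_\perp-p_\perp^{i\to i})/p_s$. This requires the cross terms between particle sectors to vanish under the partial trace, which holds because the outgoing states $\ket{i}$ and its orthogonal complement are orthogonal.
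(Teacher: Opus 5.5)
Your refined argument is correct and is essentially the paper's proof. The paper likewise splits $\rho_f'$ via the Kraus operators $\bra{g}S\ket{i}$ into the forward piece and the $p_s$-weighted non-forward piece, uses convexity of the trace distance against the pure $\rho_f=\ket{t}\bra{t}$, and bounds each normalized piece by $\sqrt{1-\bra{t}\hat\rho\ket{t}}$ (Fuchs--van de Graaf); the purification and triangle-inequality detours you sketch first are unnecessary.
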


We now translate this bound on the trace distance into a bound on how well an optimal measurement can detect the scattering process we are considering using Helstrom's theorem (see e.g. Ref. \cite{Wilde}), which connects trace distances to optimal measurement errors. Such a detection limit is naturally a function of one's error tolerances (e.g. a p-value in typical frequentist contexts), however. In our case it will be most natural to do this in terms of a power (the probability of a positive result conditional on there being an incident particle) and a type I error (the probability of a positive result in spite of there not being an incident particle), but we discuss how these can be translated into more typical frequentist or Bayesian quantities in Appendix~\ref{subapp:SingleParticleGeneralLimitProof}.

\begin{lemma}
    For any chosen type I error rate and power, no experiment searching for a single particle scattering at $O(\alpha)$ from an $N$-particle detector can have a detection limit better than
    \begin{align}
        \frac{1}{\alpha_{\rm min}^2} = O\left( N + N^2 \sqrt{s_1} \right).
    \end{align}
    Moreover, if a scattering process has $p_\perp^{i\to i} = 0$ (i.e. if fully forward elastic scattering has no effect on the target), then no experiment can detect an effect below $O(\alpha^2)$ and, at that order, the best possible scaling of the detection limit is\footnote{Although the $O(\alpha^2)$ result may appear ``better'' in that it has stronger scaling with $N$ at low entanglement, this is misleading: the detectable effect is still parametrically smaller than an $O(\alpha)$ effect, since we assume perturbativity and thus $\alpha^2 N^2 \ll 1$.}
    \begin{align}
        \frac{1}{\alpha_{\rm min}^2} = O\left( N^{3/2} + N^2 s_1^{1/4} \right).
    \end{align}
    \label{lem:SingleParticleGeneralLimit}
\end{lemma}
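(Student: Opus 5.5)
Combine Lemma~\ref{lem:TraceDistanceBound} with Helstrom's theorem, then bound $p_\perp^{i\to i}$, $p_\perp$ and $p_s$ separately, each to leading order in $\alpha$.

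\emph{Helstrom step.} For a single trial, any POVM element $E$ satisfies $\Tr[E\rho_f'] - \Tr[E\rho_f] \le \TrD{\rho_f'}{\rho_f}$. Hence, for a fixed type I error $\epsilon$ and power $1-\beta$, a detection requires
\[
\TrD{\rho_f'}{\rho_f}\ \ge\ 1-\beta-\epsilon \equiv c = O(1).
\]
If the experiment uses $m$ independent trials, I would instead work with the $m$-fold tensor products. Since $\alpha$ and $N$ enter only through the single-trial quantities below, $m$ just rescales the constant. I would treat $m$ as fixed, consistent with ``for any chosen type I error rate and power.'' The detection limit $\alpha_{\rm min}$ is then the smallest $\alpha$ for which the right-hand side of Eq.~\eqref{eq:TraceDistanceBound} can reach $c$.

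\emph{First claim.} Since $p_\perp^{i\to i}\le p_\perp$ and $p_s\le 1$, Eq.~\eqref{eq:TraceDistanceBound} gives trace distance $\le 2\sqrt{p_\perp}$. By Eq.~\eqref{eq:pPerpSimplified} and Lemma~\ref{lem:OrthogonalScatterProbTot},
\[
p_\perp \le \alpha^2\, \tilde p\,\bigl(N + 2N^2\sqrt{s_1}\bigr),
\]
where $\tilde p$ is the $N$-independent single-particle quantity $p_{\perp,1}^{\rm max}/\alpha^2$. Requiring $2\sqrt{p_\perp}\ge c$ gives $1/\alpha_{\rm min}^2 \le (4\tilde p/c^2)(N+2N^2\sqrt{s_1}) = O(N+N^2\sqrt{s_1})$.

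\emph{Second claim.} Now $p_\perp^{i\to i}=0$, so the bound reduces to $\sqrt{p_\perp p_s}$. Both factors are $O(\alpha^2)$, so the trace distance is $O(\alpha^2)$; this shows no effect is visible below that order. For $p_\perp$ I reuse Lemma~\ref{lem:OrthogonalScatterProbTot}. For $p_s$ I use the footnote: $p_s$ is $p_\perp$ with the roles of target and scattered particle swapped.

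\emph{Main obstacle: bounding $p_s$.} I would not invoke the entanglement bound for $p_s$. The projector now acts on the single incident particle, not on the $N$-body target, so Lemma~\ref{lem:VarianceBound} does not apply directly. Instead I would bound it crudely by the squared norm of the first-order amplitude:
\[
p_s \le \alpha^2\Bigl\|\sum_k T_k^{(1)}\ket{i}\ket{t}\Bigr\|^2 = \alpha^2\, \Bigl\langle \bigl(\textstyle\sum_k T_k^{(1)}\bigr)^\dagger\bigl(\textstyle\sum_k T_k^{(1)}\bigr)\Bigr\rangle \le \alpha^2 N^2 \|T_1^{(1)}\|_\infty^2 .
\]
This is just the fully coherent $N^2$ ceiling, and it is consistent with perturbativity, $\alpha^2N^2\ll1$. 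The technical conditions of Lemma~\ref{lem:OrthogonalScatterProbTot} should make this operator norm finite, for example after restricting to normalizable wavepackets. The point to verify is that this $N^2$ bound is the correct worst case for $p_s$.

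\emph{Conclusion.} Combining the two bounds,
\[
\sqrt{p_\perp p_s} \lesssim \alpha^2 N\sqrt{N+2N^2\sqrt{s_1}} \sim \alpha^2\bigl(N^{3/2} + N^2 s_1^{1/4}\bigr).
\]
Here I use $\sqrt{a+b}\le\sqrt a+\sqrt b$. Setting this $\ge c$ gives $1/\alpha_{\rm min}^2 = O(N^{3/2}+N^2 s_1^{1/4})$.
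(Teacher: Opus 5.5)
Your proposal is correct and follows essentially the same route as the paper: it applies Helstrom's bound $\frac{1}{2}\|\rho_f-\rho_f'\|_1 \ge \pi-\epsilon$ to Lemma~\ref{lem:TraceDistanceBound}, controls $p_\perp$ through Lemma~\ref{lem:OrthogonalScatterProbTot}, and in the forward-elastic-free case uses $p_s \lesssim \alpha^2 N^2$ to get the $N^{3/2}+N^2 s_1^{1/4}$ scaling. Your explicit justification $p_s \le \alpha^2 N^2 \|T_1^{(1)}\|_\infty^2$ supplies a step the paper only asserts, and your bound of both terms by $2\sqrt{p_\perp}$ is a slightly more careful version of the paper's remark that only $\sqrt{p_\perp^{i\to i}}$ contributes at $O(\alpha)$.
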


\begin{remark}
    Note that, whereas the previous two lemmas are not well-defined for mixed states, this result (and the next) do have mixed state generalizations, presented in Appendix~\ref{app:TechnicalDetails}.
\end{remark}

In a typical scattering experiment, one is concerned with detecting the presence of many incident particles, rather than a single one. Fortunately the triangle inequality for trace distances means our bound can immediately be extended to this case, so long as the different incident particles can be treated independently, i.e. as long as we can write the total S-matrix for $K$ incident particles as 
\begin{align}
    S = S_K S_{K-1} ... S_2 S_1.
\end{align}
Only one new complication arises in this case: the $s_k$ can change over the integration time of the experiment. (In fact, as we illustrate in Appendix~\ref{app:IntuitivePictures}, some degree of entanglement accumulation is quite normal in a scattering experiment.) Although we are not aware of any situations in which this entanglement accumulation is sufficient to parametrically improve an experiment's sensitivity, we leave proving (or disproving) the universality of this to future work, and instead conservatively present Theorem~\ref{thm:MultipleParticleLimit}.

\begin{theorem}
     For any chosen type I error rate and power, no experiment searching for $K$ independent incident particles scattering at $O(\alpha)$ from an $N$-particle detector can have a detection limit better than
    \begin{align}
        \frac{1}{\alpha_{\rm min}^2} = O\left( K^2 N + K^2 N^2 \sqrt{s_1^\MAX} \right),
    \end{align}
    with $s_1^\MAX$ the largest value of $s_1$ achieved over the integration time. Moreover, if a scattering process has $p_\perp^{i\to i} = 0$ (i.e. if fully forward elastic scattering has no effect on the target), then no experiment can detect an effect below $O(\alpha^2)$ and, at that order, the best possible scaling of the detection limit is
    \begin{align}
        \frac{1}{\alpha_{\rm min}^2} = O\left( K N^{3/2} + K N^2 \left(s_1^\MAX\right)^{1/4} \right).
    \end{align}
     \label{thm:MultipleParticleLimit}
\end{theorem}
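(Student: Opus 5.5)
The plan is to reduce the $K$-particle problem to $K$ applications of Lemma~\ref{lem:TraceDistanceBound}, glued together by the triangle inequality for trace distance, and then to convert the resulting total trace-distance bound into a detection limit via Helstrom's theorem, exactly as in Lemma~\ref{lem:SingleParticleGeneralLimit}.

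First, define the hybrid target states $\rho^{(j)}$, $j=0,\dots,K$. In $\rho^{(j)}$ the first $j$ incident particles are present (scattered via $S_1,\dots,S_j$) and the remaining $K-j$ are absent. The incident/outgoing particles are traced out, and any free target evolution is included in both branches. Thus $\rho^{(0)}=\rho_f$ and $\rho^{(K)}=\rho_f'$. Since $S=S_K\cdots S_1$ factorizes, $\rho^{(j)}$ and $\rho^{(j-1)}$ differ only by whether the $j$'th particle scatters off the common intermediate target state $\sigma_{j-1}$. Here $\sigma_{j-1}$ is the state produced after the first $j-1$ particles. It may be mixed and entangled with earlier outgoing particles, so we purify it or use the mixed-state versions in Appendix~\ref{app:TechnicalDetails}. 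Any subsequent channel acting identically on both branches is trace-distance contractive. Hence Lemma~\ref{lem:TraceDistanceBound} gives, for each $j$,
\begin{align}
\TrD{\rho^{(j)}}{\rho^{(j-1)}} \leq \sqrt{p_{\perp,j}^{i\to i}}+\sqrt{p_{\perp,j}\,p_{s,j}} ,
\end{align}
where the probabilities are evaluated on $\sigma_{j-1}$. Summing over $j$ with the triangle inequality bounds $\TrD{\rho_f'}{\rho_f}$ by the sum of these $K$ terms.

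Next, bound each term uniformly. Lemma~\ref{lem:OrthogonalScatterProbTot}, in its mixed-state form, gives $p_{\perp,j}\leq \alpha^2 C\,(N+2N^2\sqrt{s_1(\sigma_{j-1})})$. Since this bound is monotone in $s_1$, it is at most the same expression with $s_1^\MAX$. The forward piece obeys the same bound, and $p_{s,j}\leq\alpha^2 C' N$ because each $T_k$ acts on the probe and on only one subsystem.

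In the generic $O(\alpha)$ case, the total trace distance is $O\!\left(K\alpha\sqrt{N+N^2\sqrt{s_1^\MAX}}\right)$. Helstrom's theorem requires this to be $O(1)$ for fixed type I error and power, which gives $1/\alpha_{\min}^2=O(K^2N+K^2N^2\sqrt{s_1^\MAX})$.

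When $p_\perp^{i\to i}=0$, only the second term survives. Each step then contributes $O\!\left(\alpha^2\sqrt{N(N+N^2\sqrt{s_1^\MAX})}\right)=O(\alpha^2(N^{3/2}+N^2(s_1^\MAX)^{1/4}))$. Summing $K$ of these and requiring an $O(1)$ trace distance gives $1/\alpha_{\min}^2=O(KN^{3/2}+KN^2(s_1^\MAX)^{1/4})$. The claim that nothing is detectable below $O(\alpha^2)$ follows because every step is then $O(\alpha^2)$.

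I expect the main obstacle to be justifying the per-step application of Lemma~\ref{lem:TraceDistanceBound} to the intermediate states $\sigma_{j-1}$. These are generally mixed, because earlier outgoing particles have been traced out. They are also correlated with earlier scattering, so their entanglement and entropy differ from those of the initial state. I would handle this first by purifying $\sigma_{j-1}$ together with the earlier outgoing particles. I would then check that the single-subsystem entropy entering the bound is the one of the target subsystem, which is at most the entanglement-of-formation-type quantity and is bounded by its maximum over the integration time. This is precisely why the statement conservatively uses $s_1^\MAX$ rather than the initial $s_1$.
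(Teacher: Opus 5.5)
Your skeleton is the paper's own. You telescope the triangle inequality over the states after $0,1,\dots,K$ incident particles, bound each step with Lemmas~\ref{lem:OrthogonalScatterProbTot} and~\ref{lem:TraceDistanceBound} as in the proof of Lemma~\ref{lem:SingleParticleGeneralLimit}, replace each step's entropy by its maximum, and finish with Helstrom. The $O(\alpha)$ branch is fine.

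The $O(\alpha^2)$ branch rests on a false step. The bound $p_{s,j}\le\alpha^2C'N$ does not hold. $p_s$ is the \emph{total} probability of scattering into $\ket{f}\neq\ket{i}$; at leading order $p_s=\alpha^2\sum_{f\neq i}\nnorm{\sum_k T_k^{i\to f}\ket{t}}^2$. That each $T_k$ touches only one subsystem does not remove the interference terms $\bra{t}(T_k^{i\to f})^\dagger T_\ell^{i\to f}\ket{t}$ with $k\neq\ell$, and these survive even for product states. The main text's half-up example has $s_k=0$ but total scattering probability $O(\alpha^2N^2)$, all of it into $\ket{f}\neq\ket{i}$. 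With your bound, Lemma~\ref{lem:TraceDistanceBound} would give an $O(\alpha^2N)$ trace distance there, whereas the paper shows that example reaches order $\alpha^2N^{3/2}$.

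You arrive at the right formula only because the next step is also wrong: $\sqrt{N(N+N^2\sqrt{s})}$ is $O(N+N^{3/2}s^{1/4})$, not $O(N^{3/2}+N^2s^{1/4})$, and the two slips cancel. The correct input, which the paper uses, is $p_s\lesssim\alpha^2N^2$. For instance, the triangle inequality in $k$ gives $p_s\le\alpha^2\bigl(\sum_k D_k\bigr)^2$, where $D_k^2=\nnorm{\sum_{f\neq i}(T_k^{i\to f})^\dagger T_k^{i\to f}}_\infty$ is an $N$-independent single-constituent quantity. Then each step contributes $\sqrt{p_\perp p_s}\lesssim\alpha^2\sqrt{N^2(N+N^2\sqrt{s})}\le\alpha^2(N^{3/2}+N^2s^{1/4})$. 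Summing over $K$ steps gives the stated $KN^{3/2}+KN^2(s_1^\MAX)^{1/4}$.

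A smaller point: in your last paragraph the inequality is reversed. By concavity of the von Neumann entropy, the average marginal entropy $\Avg_k(s_k)$ of a mixed state is \emph{at least} $\EMF$, not at most. This is harmless for the statement at hand, because your purification route already gives the bound in terms of the marginals $s_k(\sigma_{j-1})$, which are exactly what $s_1^\MAX$ denotes. That route is valid for two reasons. The reference system carries no $T_k$, so only target pairs enter the pairwise expansion behind Lemma~\ref{lem:VarianceBoundVector}. And tracing out the reference can only decrease the trace distance.

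The paper instead decomposes each intermediate state into pure components and uses convexity of the trace distance, as at the end of the proof of Lemma~\ref{lem:SingleParticleGeneralLimitFull}. This gives a per-step bound in terms of $\EMF^{(\ell-1)}$, a stronger version that implies yours precisely because $\EMF\le\Avg_k(s_k)$.
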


\begin{remark}
    In many contexts, $K$ is directly proportional to an integration time $\tau$, and could therefore be replaced by $\tau$ in this bound.
\end{remark}

\textbf{Examples} --- We now present a series of examples of both parameter estimation and scattering illustrating the various parametric behaviors in Theorems~\ref{thm:CramerRaoEntropyDensity} and~\ref{thm:MultipleParticleLimit}.

Incoherent $O(N)$ behavior is widespread and needs little explanation: scattering from disordered targets, as well as all sufficiently hard scattering, is well known to be incoherent (i.e. to have cross section scaling $\sigma \propto N$). Likewise, parameter estimation (e.g. magnetic field measurement) using a product state in, say, an atom interferometer gives sensitivity that scales with $N^{-1/2}$, as each atom contributes independently (see e.g. Ref. \cite{InterferometerCoherenceReview}).

Fully coherent $O(N^2)$ behavior is likewise familiar, although its connection to entanglement is not always appreciated. We begin with two situations in which coherent detection is clearly the result of entanglement, before turning to a more subtle example. In the parameter estimation picture, atom interferometry using maximally-squeezed states (i.e. Dicke states) leads to $O(N^{-1})$ sensitivity \cite{InterferometerCoherenceReview}; Dicke states, whose single-particle reduced density matrix is easily shown to be ${\rm diag}(1/2,1/2)$, have $s_k = \ln 2$, independent of $N$, as our bounds require. Similarly, scattering from GHZ-like states is known to scale as $\alpha_{\rm min} \sim N^{-1}$ in the noiseless limit\footnote{This can be parametrically suppressed in the presence of noise scaling with $N$, but even the noiseless case must satisfy our bounds.} \cite{GHZScattering}; the GHZ state again has $\rho_k = {\rm diag}(1/2,1/2)$ and thus $s_k = \ln 2$.

More traditional versions of $O(N^2)$ coherent scattering may, however, be less obviously consistent with our bounds. We take coherent elastic neutrino-nucleus scattering (CEvNS) as a concrete example, although the ideas generalize. In CEvNS, the neutrino-nucleus cross-section is known to scale with (approximately) the number of neutrons squared, i.e. coherently \cite{Freedman:1973yd, CEvNSIntro, Akimov:2017ade, Baxter:2019mcx}. Importantly, this enhanced cross-section is used in detection, so some $N$-independent fraction of it must be orthogonal (i.e. must contribute to $p_\perp$). Lemma~\ref{lem:OrthogonalScatterProbTot} thus requires that the nucleons be entangled. (In fact, this requirement can be intuitively understood; see Appendix~\ref{app:IntuitivePictures}.) Consistently with this, the nucleons within any nucleus must in fact be entangled with one another, simply by virtue of being in a bound state; an unentangled (i.e. product) state can have no correlations between particles. We discuss this further in Appendix~\ref{app:BoundStateEntanglement}.

Somewhat less familiar is the un-entangled coherent $O(N^{3/2})$ behavior specific to (non-fully forward elastic) scattering, as shown in Theorem~\ref{thm:MultipleParticleLimit}. Nonetheless it is not novel, and is exploited in, for example, Refs.~\cite{CoherentDecoherence1, CoherentDecoherence2, SuperradiantDetection1, SuperradiantDetection2, SuperradiantDetection3, Du:2022ceh}. As a simple example of this regime, consider a target composed of $N \gg 1$ spin-$1/2$ particles, initially prepared in the coherent half-up state $\ket{t} = 2^{-N/2}(\ket{\uparrow}+\ket{\downarrow})^{\otimes N}$, and an incident spin-up particle that can scatter, with negligible momentum transfer, by transferring one positive unit of spin to any target particle, i.e. by acting with $\ket{\uparrow}\bra{\downarrow}$ on any $\mathcal{H}_k$. In this case $\ket{t}$ is a product state, and thus has $s_k = 0$, so Lemmas~\ref{lem:OrthogonalScatterProbTot} and~\ref{lem:TraceDistanceBound} together allow a trace distance after one incident particle of $O(\alpha^2 N^{3/2})$. (This is equivalent to Lemma~\ref{lem:SingleParticleGeneralLimit}---see its proof---but in a more convenient form for this discussion.) The total scattering probability is easily shown to be $O(N^2)$; intuitively, this is possible because the final states for scattering from $\mathcal{H}_k$ and $\mathcal{H}_{\ell}$ are not orthogonal (see Appendix~\ref{app:IntuitivePictures}). However the final state $\ket{t'}$ after a scatter is not orthogonal to $\ket{t}$; a short calculation shows that $||\ket{t'}\bra{t'} - \ket{t}\bra{t}||_1/2 = 1/\sqrt{N+1}$, with zero trace distance if no scatter occurred. Thus the total trace distance is $||\rho_f'-\rho_f||_1/2 = O(\alpha^2 N^{3/2})$, precisely saturating our bound.

The $O(N)$, $O(N^2)$, and, for scattering, $O(N^{3/2})$ scalings are not the only achievable behaviors of a detector. As a simple parameter estimation example, consider again a target composed of $N \gg 1$ spin-$1/2$ particles, but now prepared in the not-quite-polar Dicke state
\begin{align}
    \ket{t} = \ket{J = \frac{N}{2},\, M = -\frac{N}{2} + \sqrt{N}}.
\end{align}
Here, $J$ is the total spin quantum number and $M$ is the quantum number for spin along the $z$ axis, i.e.
\begin{subequations}
\begin{align}
    \mathbf{J}^2 \ket{J, M} &= J(J+1) \ket{J, M} \\
    \mathbf{J}_z \ket{J, M} &= M \ket{J, M}.
\end{align}
\end{subequations}
where $\mathbf{J}_{x,y,z}$ is the total spin along a given axis, with $\mathbf{J}^2 = \mathbf{J}_x^2 + \mathbf{J}_y^2 + \mathbf{J}_z^2$. The reduced density matrix of a single spin in this state is $\rho_k = {\rm diag}(N^{-1/2}, 1-N^{-1/2})$, so we have
\begin{align}
    s_k = \frac{2+\ln N}{2\sqrt{N}} + O\left(\frac{1}{N}\right).
\end{align}
Since this is an eigenstate of $J_z$, we have $\expect{J_x} = 0$ while $\expect{J_x^2} = (J(J+1)-M^2)/2$ and thus
\begin{align}
    F_Q\left[ \rho, J_x \right] &= 2N^{3/2} + O(N).
\end{align}
This is a genuinely intermediate scaling behavior, between the incoherent $O(N)$ scaling and the fully coherent $O(N^2)$, but it does not saturate our bound (although it does, of course, satisfy it): Lemma~\ref{lem:QFIBound} gives $F_Q\left[ \rho, J_x \right] = O(N^{7/4} \sqrt{\ln N})$.

As an intermediate scattering example one can instead take $J=N/2$, $M = -N/2 + N^{2/3}$, with a scattering process that flips a single spin from down to up. This is easily shown to give $s_k \sim (3+\ln N)/(3N^{1/3})$ while $||\rho_f'-\rho_f||_1/2 \sim \alpha^2 N^{5/3}$, in spite of Lemmas~\ref{lem:OrthogonalScatterProbTot} and~\ref{lem:TraceDistanceBound} together allowing trace distances as large as $O(\alpha^2 N^{23/12} (\ln N)^{1/4})$; we again see that the bound holds but is somewhat loose.

In fact, as we noted earlier, we are not aware of any examples saturating the $O(N^2 \sqrt{s})$ operator variance bound for non-zero entanglement that goes to zero as $N\to\infty$, i.e. for $0  < s = o(1)$, but we leave proving a stronger bound (if one exists) to future work.

\textbf{Conclusion} --- In this work, we have derived general bounds on coherent enhancement in many-body detectors and targets, showing that superlinear scaling with system size is quantitatively constrained by entanglement. By bounding collective responses generated by sums of local operators, we demonstrated that their coherent contribution is limited by the correlations present in the underlying quantum state. For pure states, this leads to simple bounds in terms of the average single-subsystem entanglement entropy, while for mixed states the bounds can be expressed using a natural multipartite entanglement of formation–type quantity that excludes spurious enhancement from classical mixing.

These results provide a unified perspective on phenomena that are often treated separately, including enhanced quantum Fisher information in metrology and coherent contributions to scattering cross sections. In both cases, the familiar distinction between incoherent $O(N)$ scaling and maximally coherent $O(N^2)$ scaling emerges as a consequence of how entanglement is distributed across the system, with partially entangled states exhibiting intermediate behavior. Our analysis thus clarifies the entanglement resources required to surpass incoherent scaling, independent of the particular measurement system.

Beyond placing fundamental limits on achievable sensitivity, these bounds have practical implications for the design and interpretation of experiments that rely on coherent enhancement, ranging from atomic sensors to particle detectors. By framing coherent enhancement as a product of entanglement, our results motivate the development of new ways to produce and employ entanglement in order to improve measurements across a wide range of applications.

\acknowledgments

The authors would like to thank Paddy Fox, Peter Graham, Savas Dimopoulos, and Marios Galanis for insightful discussions. This  work was supported by the Office of High Energy Physics of the U.S. Department of Energy QuantISED program. 
RH is also partially supported by the Department of Energy, Office of Science, National Quantum Information Science Research Centers,
Superconducting Quantum Materials and Systems Center (SQMS) under contract number DE-AC02-07CH11359.
This manuscript has been authored  by  Fermi Forward Discovery Group, LLC under Contract No.~89243024CSC000002 with the U.S.~Department of Energy, Office of Science, Office of High Energy Physics.

\bibliographystyle{apsrev4-2}
\bibliography{main}

\clearpage
\newpage

\maketitle
\onecolumngrid
\makeatletter
\let\set@footnotewidth\set@footnotewidth@one
\let\compose@footnotes\compose@footnotes@one
\makeatother

\begin{center}
\textbf{\large Entanglement Requirements for  Coherent Enhancement in Detectors} \\ 
\vspace{0.05in}
{\it \large Supplemental Material}\\ 
\vspace{0.05in}
{Zachary Bogorad and Roni Harnik}
\end{center}

\setcounter{figure}{0}
\setcounter{table}{0}
\setcounter{section}{0}
\renewcommand{\thefigure}{S\arabic{figure}}
\renewcommand{\thetable}{S\arabic{table}}
\interfootnotelinepenalty=10000

\appendix

\section{Table of Symbols}\label{app:SymbolTable}

\vspace{-8mm}
\begin{longtblr}[
  label   = none, 
]{
  colsep = 0.008\textwidth,
  colspec = {Q[l,wd=0.08\textwidth] Q[l,wd=0.272\textwidth] Q[l,wd=0.6\textwidth]},
  rowhead = 1,
}
\toprule
Symbol & Expression & Meaning \\
\midrule
$N$ & & A number of subspaces that a detector/target can be factored into (e.g. a number of distinguishable particles, such that the detector/target can be written in terms of products of $N$ states). \\ \hline

$\mathcal{H}_k$ &  & A finite-dimensional Hilbert space (e.g. the Hilbert space of the $k$'th distinguishable particle in the detector/target). \\
$\mathcal{H}$ & $= \bigotimes\limits_{k=1}^N \mathcal{H}_k$ & The tensor product of the $N$ spaces $\mathcal{H}_k$. $\mathcal{H}$ is typically the Hilbert space of the detector/target we consider, without any incident or outgoing particles. \\ \hline

$I_k$ &  & The identity operator on $\mathcal{H}_k$. \\
$G_k$ &  & A generic (not necessarily Hermitian) linear operator on $\mathcal{H}_k$. \\ 
$G$ & $= \sum\limits_{k=1}^N G_k \bigotimes\limits_{\ell \neq k} I_\ell$ & The sum of the $N$ operators $G_k$, with each multiplied by identity operators on all other subspaces. \\ \hline

$H_k$ &  & A Hamiltonian (i.e. a Hermitian operator) on $\mathcal{H}_k$. \\
$H$ & $= \sum\limits_{k=1}^N H_k \bigotimes\limits_{\ell \neq k} I_\ell$ & The sum of the $N$ Hamiltonians $H_k$, with each multiplied by identity operators on all other subspaces. \\ \hline

$\rho$ &  & A density matrix on $\mathcal{H}$. Note that, unless otherwise stated, $\rho$ should be understood to include only the detector/target, and not any incident or outgoing particles. \\
$\rho_k$ & $=\Tr_{\mathcal{H}/ \mathcal{H}_k}\left[\rho\right]$ & The reduced density matrix of $\rho$ on subspace $\mathcal{H}_k$, i.e. after tracing over every  $\mathcal{H}_\ell$ for $k \neq \ell$. \\
$\rho_f$ $(\rho_f')$ & $=S \rho S^\dagger$ \qquad\qquad\qquad\qquad  (with $S$ restricted to the target Hilbert space; see text) & The density matrix of a target after an observation period in the absence (presence) of an incident particle that could scatter from it, or in the absence (presence) of an emitted particle. We work in the interaction picture, such that $S=1$ with no scattered particle. \\
\hline

$s_k$ & $= -\Tr\left[ \rho_k \ln \rho_k \right]$ & The single-subspace (e.g. single-particle or single-mode) von Neumann entropy of $\rho_k$. \\
$\EMF$ & See Equation \eqref{eq:EntanglementOfFormationDefinition}. & A multipartite entanglement of formation of the (possibly mixed) state $\rho$, i.e. the minimum probability-weighted average entropy per particle/mode that could be used to create $\rho$. \\\hline

$\theta$ & $\rho \to e^{-i\theta H}\rho e^{i\theta H}$ & A scaling factor for the evolution of a state $\rho$ via Hamiltonian $H$. \\
$\Delta\theta$ & & The uncertainty on $\theta$ following $m$ measurements using initial state $\rho$. \\
$m$ &  & The number of measurements performed. \\
$F_Q[\rho, H]$ & See e.g. Ref. \cite{QFIReview}. & The quantum Fisher information of the state $\rho$ with respect to Hamiltonian $H$. \\
$F_Q^\MAX[H]$ & $=\max\limits_{\rho} F_Q[\rho, H]$ & The maximum quantum Fisher information of any state $\rho$ with respect to Hamiltonian $H$. $\rho$ should be understood as being on the same Hilbert space as $H$, e.g. $\mathcal{H}_k$ for $F_Q^\MAX[H_k]$. \\ \hline

$S$ & $= 1 + i\alpha \sum\limits_{k=1}^N T_k^{(1)} + O(\alpha^2)$ & The S-matrix for a scattering process, e.g. $\ket{\psi_{\rm out}} = S\ket{\psi_{\rm in}}$ (see also $\rho_f$). (Here $\ket{\psi_{\rm in}} = \ket{i}\ket{t}$ and $\ket{\psi_{\rm out}} = \ket{f}\ket{t'}$.) \\
$T_k^{(1)}$ & See $S$. & The order-$\alpha$ $T$-matrix contribution of subspace $\mathcal{H}_k$ (e.g. the leading-order scattering contribution of the $k$'th particle). \\
$\alpha$ & See $S$. & An expansion parameter for the S-matrix (see $S$). Note that, given a coupling $g$, $\alpha$ may be order $g$ (for absorption/emission processes) or order $g^2$ (for scattering processes). \\
$\alpha_{\rm min}$ & & The smallest detectable value of $\alpha$ for a chosen experiment, given a desired power and type I error rate. \\
\hline

$\Pi_\perp$ & $=1-\ket{t}\bra{t}$ & The projector onto the space orthogonal to a pure state $\ket{t}$. \\
$p_\perp$ or $p_\perp^{i \to f}$ & $=\Tr\left[S\rho S^\dagger \Pi_\perp \right]$ \qquad\qquad\qquad  (with $S$ restricted to the target Hilbert space; see text) & The ``orthogonal scattering probability,'' i.e. the total probability to scatter into any target state orthogonal to the initial target state (defined only for an initially pure $\rho = \ket{t}\bra{t}$). May be restricted to a single final state of the outgoing particle ($p_\perp^{i \to f}$) or summed over all such possible states ($p_\perp$), changing the target space-restricted $S$ at left. Corresponds to a probability per unit time for emission/decay. \\
$p_{\perp,k}^{\rm max}$ & & The orthogonal scattering probability from the $k$'th particle or mode alone, maximized over states of that particle or mode. \\
$p_s$ & $=\Tr[S(\rho\otimes\ket{i}\bra{i})S^\dagger(1-\ket{i}\bra{i})]$ (using the full $S$) & The total (not just orthogonal) scattering probability for all outgoing states not equal to the incident state (i.e. summed over $\ket{f} \neq \ket{i}$). \\
\hline

$\ket{i}$ $(\ket{f})$ & & The incident (outgoing) particle state of the scattered, absorbed, or emitted particle. \\
$\ket{t}$ $(\ket{t'})$ & & The initial (final) state of the target. $\rho = \ket{t}\bra{t}$ for pure initial target states. \\ \hline

$\pi$ & & The power of a measurement, i.e. the probability of a positive signal when the searched-for event did, in fact, occur. \\
$\epsilon$ & & The type I error rate of a measurement, i.e. the probability of a positive signal when the searched-for event did not occur. \\ \hline

$||\ket{\psi}||$ & $=\sqrt{\braket{\psi}{\psi}}$ & The norm of a pure state. \\
$||M||_p$ & $=\Tr\left[|M|^p\right]^{1/p}$ & The Schatten $p$-norm of an operator $M$, where $|M| = \sqrt{M^\dagger M}$. For $p=\infty$, this is the largest eigenvalue of $\sqrt{M^\dagger M}$.\\
\SetCell[c=2]{l} $\frac{1}{2}||\rho-\rho'||_1$ & ~ & The trace distance between density matrices $\rho$ and $\rho'$, which determines how well an optimal measurement could distinguish the two states through e.g. Helstrom's Theorem (see e.g. Ref. \cite{Wilde}). \\
$D(\rho || \sigma)$ & $= \Tr\left[ \rho\ln\rho - \rho\ln\sigma \right]$ & The quantum relative entropy, a measure of the distinguishability of $\rho$ and $\sigma$ by extending the classical Kullback-Leibler divergence. \\
\bottomrule
\end{longtblr}

\section{Generalized Statements and Complete Proofs}\label{app:TechnicalDetails}

In this appendix, we generalize our main-text results to account for both particle/mode-dependent properties (different $\mathcal{H}_k$, different $G_k$, etc.) and for mixed states (since the main text results, while true for mixed states, are not well-suited to them). We also provide complete proofs of these generalized results, which immediately imply the main text results.

As the von Neumann entanglement entropy is not a good measure of entanglement for mixed states---notably, it can be non-zero for mixtures of unentangled states---some of the results in this section will instead use a multipartite entanglement of formation \cite{Bennett:1996gf, Wootters:1997id, Horodecki:2009zz, Plenio:2007zz},
\begin{align}
    \EMF &= \min\limits_{\rho = \sum_\ell p_\ell \ket{\psi_\ell}\bra{\psi_\ell}} \sum\limits_\ell p_\ell \Avg_k\left(s_{k,\ell}\right) \label{eq:EntanglementOfFormationDefinition}
\end{align}
where $\Avg_k(\cdot)$ is the average over values of $k$, the minimum is over all weighted mixtures of pure states that give the density matrix $\rho$ and $s_{k,\ell}$ is $s_k$ for the state $\ket{\psi_\ell}\bra{\psi_\ell}$. Intuitively, $\EMF$ is the minimum probability-weighted average single-particle entanglement entropy (between each particle/mode and the rest of the system) over all pure-state decompositions of $\rho$, and thus the results below will give a lower bound for the amount of entanglement that would need to be produced and maintained in an experiment in order to achieve a certain level of coherence. Note that, for pure states, $\EMF = \Avg_k(s_k)$, and, for the $k$-independent case in the main text, $\EMF$ is simply $s_k$.

\subsection{Lemma \ref{lem:VarianceBound}} \label{subapp:VarianceProof}

\begin{manuallemma}{1a}
    For any $\mathcal{H}$ and $G$ of the forms in Eqs.~\eqref{eq:Hdefinition} and~\eqref{eq:Gdefinition}, and any density matrix $\rho$ on $\mathcal{H}$,
    \begin{align}
    \begin{split}
        {\rm Var}(G) \equiv \Tr[\rho G^\dagger G] - \left| \Tr[\rho G] \right|^2 \leq \Avg_k \left(||G_k||_\infty^2 \right) \left( N + 2 N^2 \sqrt{ \Avg\limits_k\left(s_k\right) } \right). \label{eq:VarianceBoundFull}
    \end{split}
    \end{align}
    \label{lem:VarianceBoundFull}
\end{manuallemma}

\begin{proof}
    First expand the variance into pairwise terms,
    \begin{align}
        \Var(G) &= \sum\limits_{k=1}^N \Var(G_k) + \sum\limits_{k \neq \ell} \Tr\left[ G_k^\dagger G_\ell (\rho_{k\ell} - \rho_k \otimes \rho_\ell) \right].
    \end{align}
    Variances are trivially bounded by the infinity norm squared, immediately giving us the first part of Equation~\eqref{eq:VarianceBoundFull}:
    \begin{align}
        \sum\limits_{k=1}^N \Var(G_k) \leq N \Avg_k \left(||G_k||_\infty^2 \right).
    \end{align}
    The left-hand side is simply the usual, incoherent contribution to a measurement or scattering process, no different from using each component of the target/detector entirely separately; the right-hand side is a (loose) upper bound on it based on the maximum magnitude of the individual interactions.

    We now bound the off-diagonal terms using a series of known inequalities; this approach is inspired in part by Ref.~\cite{Wolf:2007tdq}. Holder's inequality (see e.g. Ref. \cite{SchattenHolderBook}) lets us bound the trace (which is no greater than the Schatten 1-norm) by a product of Schatten norms:
    \begin{align}\begin{split}
        \Tr\left[ G_k^\dagger G_\ell (\rho_{k\ell} - \rho_k \otimes \rho_\ell) \right] &\leq ||G_k^\dagger G_\ell||_\infty ||\rho_{k\ell} - \rho_k \otimes \rho_\ell||_1 \\
        &= ||G_k||_\infty ||G_\ell||_\infty ||\rho_{k\ell} - \rho_k \otimes \rho_\ell||_1.
    \end{split}\end{align}
    The infinity norms here are, again, simply a measure of how large the individual operators are. The more meaningful component is the remaining term, which measures how far the $k$'th and $\ell$'th subsystems are from being in a product state since, for a product of pure states, $\rho_{k\ell} = \rho_k \otimes \rho_\ell$ and thus that term is zero.
    
    The quantum Pinsker inequality (see e.g. Ref. \cite{Watrous}) now bounds the 1-norm by
    \begin{align}
        ||\rho_{k\ell} - \rho_k \otimes \rho_\ell||_1 &\leq \sqrt{2D(\rho_{k\ell} || \rho_k\otimes\rho_\ell)} \label{eq:rhoDiffRelativeEntropy}
    \end{align}
    where
    \begin{align}
        D(\rho || \sigma) = \Tr\left[ \rho\ln\rho - \rho\ln\sigma \right]
    \end{align}
    is the quantum relative entropy between two density matrices. We can rewrite this in our case as
    \begin{align}\begin{split}
        D(\rho_{k\ell} || \rho_k\otimes\rho_\ell) &= \Tr\left[ \rho_{k\ell}\ln\rho_{k\ell} - \rho_{k\ell}\ln(\rho_k\otimes\rho_\ell) \right] \\
        &= \Tr\left[ \rho_{k\ell}\ln\rho_{k\ell} - \rho_{k\ell}\ln(\rho_k\otimes I_\ell) - \rho_{k\ell}\ln(I_k\otimes \rho_\ell) \right] \\
        &= -s_{k\ell} + s_k + s_\ell
    \end{split}\end{align}
    where $s_{k\ell} = -\Tr[\rho_{k\ell}\ln\rho_{k\ell}]$.

    The Araki-Lieb inequality \cite{ArakiLieb, Watrous} now limits $s_{k\ell}$ to the range
    \begin{align}
        |s_k - s_\ell| \leq s_{k\ell} \leq s_k + s_\ell,
    \end{align}
    so we have
    \begin{align}
        D(\rho_{k\ell} || \rho_k\otimes\rho_\ell) &\leq 2\min(s_k, s_\ell).
    \end{align}
    Note that this can become a very loose upper bound if typical subsystems are entangled with only a small number ($\ll N$) other subsystems, such that an order-one fraction of the $D(\rho_{k\ell} || \rho_k\otimes\rho_\ell)$ are zero rather than $\sim\min(s_k,s_\ell)$; this might occur if, for example, entanglement in the system is local. In such systems one could therefore instead rewrite our bound in terms of sums of quantum relative entropies, effectively stopping this derivation at Equation~\eqref{eq:rhoDiffRelativeEntropy}, giving a much stronger bound. We do not emphasize this form of the result, however, since it is often not as convenient or as easily interpretable. Nonetheless this is an important qualitative requirement for achieving large coherent enhancements: one needs not only sufficient entanglement per particle, but that entanglement must be distributed sufficiently broadly.
    
    Combining our results thus far, we have
    \begin{align}
        \sum_{k\neq\ell} \Tr\left[ G_k^\dagger G_\ell (\rho_{k\ell} - \rho_k \otimes \rho_\ell) \right] &\leq 2 \sum_{k\neq\ell} ||G_k||_\infty ||G_\ell||_\infty \sqrt{\min(s_k, s_\ell)}.
    \end{align}
    This expression contains all of the key insights of this result, but it is somewhat unwieldy. We therefore now simply this expression somewhat, at the expense of loosening the inequality, especially for systems in which $||G_k||_\infty$ or $s_k$ vary strongly between subsystems.

    First, note that $\sqrt{\min(s_k, s_\ell)} \leq \sqrt[4]{s_k s_\ell}$. Then, using the Cauchy-Schwartz inequality, we have
    \begin{align}\begin{split}
        \sum_{k\neq\ell} \Tr\left[ G_k^\dagger G_\ell (\rho_{k\ell} - \rho_k \otimes \rho_\ell) \right] &\leq 2\sum_{k \neq \ell} ||G_k||_\infty ||G_\ell||_\infty \sqrt[4]{s_k s_\ell} \\
        &\leq 2 \sqrt{ \left(\sum_{k,\ell} ||G_k||_\infty^2 ||G_\ell||_\infty^2 \right) \left( \sum_{k,\ell} \sqrt{s_k s_\ell} \right) } \\
        &= 2N^2 \Avg\limits_k \left( ||G_k||_\infty^2 \right) \Avg\limits_k\left(\sqrt{s_k}\right).
    \end{split}\end{align}

    At this point we again choose to weaken the result (for significantly varying $s_k$) in order to write it in terms of a somewhat more convenient quantity. Concavity of the square root gives
    \begin{align}
        \Avg\limits_k\left(\sqrt{s_k}\right) \leq \sqrt{ \Avg\limits_k\left(s_k\right) }.
    \end{align}
    We thus have
    \begin{align}
        {\rm Var}(G) = \Tr[\rho G^\dagger G] - \left| \Tr[\rho G] \right|^2 \leq \Avg_k \left(||G_k||_\infty^2 \right) \left( N + 2 N^2 \sqrt{ \Avg\limits_k\left(s_k\right) } \right).
    \end{align}
\end{proof}

As we noted in the course of this proof, we have made several simplifications to our results at the cost of weakening it, especially for systems with strong $k$-dependence. Since the stronger intermediate results may be preferable in some cases, we summarize them here:
\begin{subequations}\begin{align}
    \Var(G) - \sum\limits_{k=1}^N \Var(G_k) &= \sum\limits_{k \neq \ell} \Tr\left[ G_k^\dagger G_\ell (\rho_{k\ell} - \rho_k \otimes \rho_\ell) \right] \\
    &\leq \sum\limits_{k \neq \ell} ||G_k||_\infty ||G_\ell||_\infty ||\rho_{k\ell} - \rho_k \otimes \rho_\ell||_1 \\
    &\leq \sum\limits_{k \neq \ell} ||G_k||_\infty ||G_\ell||_\infty \sqrt{2D(\rho_{k\ell} || \rho_k\otimes\rho_\ell)} \\
    &\leq 2\sum\limits_{k \neq \ell} ||G_k||_\infty ||G_\ell||_\infty \sqrt{\min(s_k,s_\ell)} \\
    &\leq 2N^2 \Avg\limits_k \left( ||G_k||_\infty^2 \right) \Avg\limits_k\left(\sqrt{s_k}\right) \\
    &\leq 2N^2 \Avg\limits_k \left( ||G_k||_\infty^2 \right) \sqrt{\Avg\limits_k\left(s_k\right)}.
\end{align}\end{subequations}
These expressions can easily be substituted into Theorem~\ref{thm:CramerRaoEntropyDensityFull} and Theorem~\ref{thm:MultipleParticleLimitFull} to strengthen those results as well.

Lemma~\ref{lem:OrthogonalScatterProbTotFull} will require a slight generalization of this result, to collections of operators carrying an additional index (the outgoing particle state $\ket{f}$, in scattering contexts):

\begin{manuallemma}{1b}
    For any $\mathcal{H}$ form in Eq.~\eqref{eq:Hdefinition} and let $\{G_k^a\}$ be a finite, $a$-indexed collection of operators on $\mathcal{H}_k$, with 
    \begin{align}
        G^a = \sum\limits_{k=1}^N G_k^a \bigotimes\limits_{\ell \neq k} I_\ell.
    \end{align}
    Then for any density matrix $\rho$ on $\mathcal{H}$,
    \begin{align}
    \begin{split}
        \sum\limits_a {\rm Var}(G^a) \leq \Avg_k \left( D_k^2 \right) \left( N + 2 N^2 \sqrt{ \Avg\limits_k\left(s_k\right) } \right). \label{eq:VarianceBoundVector}
    \end{split}
    \end{align}
    where
    \begin{align}
        D_k^2 = \nnorm{ \sum\limits_a \left(G_k^a\right)^\dagger G_k^a }_\infty.
    \end{align}
    \label{lem:VarianceBoundVector}
\end{manuallemma}

\begin{remark}
    Applying Lemma~\ref{lem:VarianceBoundFull} to each $G^a$ and then summing would give almost the same result, but with
    \begin{align}
        \sum\limits_a \nnorm{ G_k^a }_\infty^2 = \sum\limits_a \nnorm{ \left(G_k^a\right)^\dagger G_k^a }_\infty
    \end{align}
    in place of $D_k^2$, i.e. with the order of the sum and the norm reversed.
\end{remark}

\begin{proof}
    The proof is almost identical to that of Lemma~\ref{lem:VarianceBoundFull}, so we merely summarize the changes. In this case we have
    \begin{align}
        \sum\limits_a \Var(G^a ) &= \sum\limits_a \sum\limits_{k=1}^N \Var(G_k) + \sum\limits_{k \neq \ell} \Tr\left[ \sum\limits_a \left(G_k^a\right)^\dagger G_\ell^a (\rho_{k\ell} - \rho_k \otimes \rho_\ell) \right].
    \end{align}

    For the incoherent first term, note that
    \begin{align}
        \sum\limits_a \Var(G_k) &\leq \sum\limits_a \Tr\left[ \rho_k \left(G_k^a\right)^\dagger G_k^a \right] \\
        &= \Tr\left[ \rho_k \sum\limits_a \left(G_k^a\right)^\dagger G_k^a \right] \\
        &\leq \nnorm{\sum\limits_a \left(G_k^a\right)^\dagger G_k^a}_\infty = D_k^2.
    \end{align}
    Thus
    \begin{align}
        \sum\limits_a \sum\limits_{k=1}^N \Var(G_k) \leq N \Avg_k \left( D_k^2 \right).
    \end{align}

    For the coherent second term, we apply Holder's inequality as before,
    \begin{align}
        \Tr\left[ \sum\limits_a \left(G_k^a\right)^\dagger G_\ell^a (\rho_{k\ell} - \rho_k \otimes \rho_\ell) \right] &\leq \nnorm{ \sum\limits_a \left(G_k^a\right)^\dagger G_\ell^a }_\infty ||\rho_{k\ell} - \rho_k \otimes \rho_\ell||_1,
    \end{align}
    but now apply the row/column operator Cauchy-Schwartz inequality (see e.g. Ref.~\cite{AlgebraBook}) to split the first factor on the right-hand side differently:
    \begin{align}
        \nnorm{ \sum\limits_a \left(G_k^a\right)^\dagger G_\ell^a }_\infty \leq D_k D_\ell.
    \end{align}
    From here the proof proceeds exactly as for Lemma~\ref{lem:VarianceBoundFull}.
\end{proof}

%

\subsection{Lemma \ref{lem:QFIBound}} \label{subapp:QFIProof}

\begin{manuallemma}{2a}
    The maximum QFI of a state $\rho$ is
    \begin{align}
        \frac{ F_Q[\rho, H] }{\Avg\limits_{k}\left( F_Q^\MAX[H_k] \right)} \leq N + 2N^2 \sqrt{\EMF} \label{eq:QFIBoundFull}
    \end{align}
    where $F_Q^\MAX[H_k]$ is the maximum QFI that could be achieved with any state on $\mathcal{H}_k$ alone.
    \label{lem:QFIBoundFull}
\end{manuallemma}

\begin{proof}
    Let $\lambda_{\rm max}^{(k)}$ ($\lambda_{\rm min}^{(k)}$) be the maximum (minimum) eigenvalue of each $H_k$, and define
    \begin{align}
        H_k' = H_k - \pfrac{\lambda_{\rm max}^{(k)} + \lambda_{\rm min}^{(k)}}{2} I_k,
    \end{align}
    such that the largest and smallest eigenvalues of each $H_k'$ differ only by a sign. Now let
    \begin{align}
        H' = \sum\limits_{k=1}^N H_k' \bigotimes\limits_{\ell \neq k} I_\ell.
    \end{align}
    Note that each $H_k'$ differs from $H_k$ only by a multiple of the identity $I_k$, and thus thus is true of the difference between $H'$ and $H$ as well.

    Now let $\ket{\lambda_{\rm max}^{\prime (k)}}$ ($\ket{\lambda_{\rm min}^{\prime (k)}}$) be an eigenstate of $H_k$ with the maximum (minimum) eigenvalue, and let
    \begin{align}
        \ket{\psi_k} = \frac{\ket{\lambda_{\rm min}^{\prime (k)}} + \ket{\lambda_{\rm max}^{\prime (k)}}}{\sqrt{2}}.
    \end{align}
    This is a pure state, and it has variance $\left(\lambda_{\rm max}^{\prime (k)}\right)^2 = \left(\lambda_{\rm min}^{\prime (k)}\right)^2 = ||H_k'||_\infty^2$. We thus have $F_Q[\ket{\psi_k}, H_k'] = 4||H_k'||_\infty^2$, and moreover $F_Q^\MAX[H_k'] = 4||H_k'||_\infty^2$ since the QFI can never exceed four times the variance.

    By Lemma~\ref{lem:VarianceBoundFull}, we have
    \begin{align}
        F_Q[\rho, H'] \leq 4\Avg_k \left(||H'_k||_\infty^2 \right) \left( N + 2 N^2 \sqrt{ \Avg\limits_k\left(s_k\right) } \right) = \Avg_k \left( F_Q^\MAX[H_k'] \right) \left( N + 2 N^2 \sqrt{ \Avg\limits_k\left(s_k\right) } \right).
    \end{align}
    But, since the QFI is unchanged by adding multiples of the identity to an operator, this immediately implies the same relation for the original $H_k$ and $H$:
    \begin{align}
        F_Q[\rho, H] \leq \Avg_k \left( F_Q^\MAX[H_k] \right) \left( N + 2 N^2 \sqrt{ \Avg\limits_k\left(s_k\right) } \right). \label{eq:QFIBoundFullPure}
    \end{align}

    This completes the proof for pure states, but we can make a stronger statement for mixed states using our entanglement of formation, $\EMF$. Consider any decomposition of a mixed state $\rho$ into pure states,
    \begin{align}
        \rho = \sum\limits_\ell p_\ell \ket{\psi_\ell}\bra{\psi_\ell}.
    \end{align}
    Since the QFI is convex under state mixtures \cite{Yu:2013mlg}, Equation~\eqref{eq:QFIBoundFullPure} gives us
    \begin{align}
        F_Q[\rho, H] \leq \Avg_k \left( F_Q^\MAX[H_k] \right) \left( N + 2 N^2 \sum_\ell p_\ell \sqrt{ \Avg\limits_k\left(s_{k,\ell} \right) } \right),
    \end{align}
    where $s_{k,\ell}$ is $s_k$ for the state $\ket{\psi_\ell}\bra{\psi_\ell}$. Another application of Cauchy-Schwarz gives us
    \begin{align}
        F_Q[\rho, H] \leq \Avg_k \left( F_Q^\MAX[H_k] \right) \left( N + 2 N^2 \sqrt{ \sum_\ell p_\ell \Avg\limits_k\left(s_{k,\ell} \right) } \right).
    \end{align}
    Then, since this must hold for all decompositions of $\rho$,
    \begin{align}
        \frac{ F_Q[\rho, H] }{\Avg\limits_{k}\left( F_Q^\MAX[H_k] \right)} \leq N + 2N^2 \sqrt{\EMF}.
    \end{align}
\end{proof}

%

\subsection{Theorem \ref{thm:CramerRaoEntropyDensity}} \label{subapp:CRProof}

\begin{manualtheorem}{3a}
    Given the unitary evolution of a state $\rho$ to $e^{-i\theta H}\rho e^{i\theta H}$, the best precision with which one can determine $\theta$ in $m$ measurements is
    \begin{align}
        (\Delta \theta)^2 \geq \frac{1}{m \Avg\limits_{k}\left( F_Q^\MAX[H_k] \right) \left(N + 2N^2 \sqrt{\EMF} \right)}.
    \end{align}
    \label{thm:CramerRaoEntropyDensityFull}
\end{manualtheorem}

\begin{proof}
    This is simply a substitution of Equation~\eqref{eq:QFIBoundFull} into the quantum Cram{\'e}r-Rao bound (see e.g. Ref.~\cite{QFIReview}).
\end{proof}

%

\subsection{Lemma \ref{lem:OrthogonalScatterProbTot}} \label{subapp:ScatteringProof}

Before we provide a generalized form for and proof of Lemma~\ref{lem:OrthogonalScatterProbTot}, we discuss some technical requirements for the single-particle cross-sections required in order for our results to hold. In the main text, we considered a scattering process with S-matrix
\begin{align}
    S = 1 + i\alpha \sum\limits_{k=1}^N T_k^{(1)} + O(\alpha^2),
\end{align}
although we restricted to the case of all $T_k$ equal, which we will not do here. We then defined an orthogonal scattering probability, glossing over the distinction between that probability for a single outgoing particle state $\ket{f}$, and the probability summed or integrated over those states. We now handle this more precisely.

For a particular outgoing particle state $\ket{f}$, we will write the orthogonal scattering probability as
\begin{align}
    p_\perp^{i \to f} = \bra{i}\bra{t} S^\dagger \left( \ket{f}\bra{f}\otimes \Pi_\perp \right) S \ket{i}\ket{t}, \label{eq:pPerpif}
\end{align}
omitting the superscipt when summing\footnote{Our results formally hold only for finite-dimensional spaces of possible outgoing particle states. Although real physical systems often have continuous spaces of possible final states, in practice it should essentially always be possible to remedy this with appropriate spatial and energy cutoffs, so long as any scattering poles are regulated. In the continuous final state case, the single-final state probability should then be understood as an infinitesimal carrying the appropriate delta functions for momentum and energy conservation, but these are uninteresting for our purposes.} over $\ket{f}$, i.e. $p_\perp$.

We assume that $p_\perp \ll 1$, noting that our scaling bounds became trivially true if that is not the case, since ultimately $p_\perp \leq 1$ regardless of $N$. This limit will always hold for sufficiently small $\alpha$, which is the regime where our results are sensible.

For any particular incident and outgoing states $\ket{i}$ and $\ket{f}$, we additionally define the maximum orthogonal scattering probability (or probability density) of each individual target particle, over all possible states of that particle, to be $p_{\perp,k}^{\rm max}(i\to f)$. We use $p_{\perp,k}^{\rm max}$ to refer to the sum or integral of $p_{\perp,k}^{\rm max}(i\to f)$ over final states $\ket{f}$, and require $p_{\perp,k}^{\rm max} \ll 1$ as above. 

We additionally assume that $p_{\perp,k}^{\rm max}$ is $O(1)$ with respect to $N$, i.e. that there is some $N$-independent upper bound on it that holds for all $k$. Although this last assumption is not necessary for Lemma~\ref{lem:OrthogonalScatterProbTotFull}, its absence would render the lemma misleading by allowing $N$-dependence to hide within the factor of $\Avg_k( p_{\perp,k}^{\rm max} )$. Moreover it \textit{is} necessary for the subsequent results.

\begin{manuallemma}{4a}
    For any scattering process of the form above, the \textit{total} orthogonal scattering probability from a pure target state $\rho$ at order $\alpha^2$ is upper bounded by
    \begin{align}
        p_\perp \leq \Avg\limits_k\left( p_{\perp,k}^{\rm max} \right) \left( N + 2 N^2 \sqrt{ \Avg\limits_k\left(s_k\right) } \right). \label{eq:OrthogonalScatterProbTotFull}
    \end{align}
    \label{lem:OrthogonalScatterProbTotFull}
\end{manuallemma}

\begin{proof}
    Let
    \begin{align}
        T_k^{i \to f} \equiv \bra{f} T_k^{(1)} \ket{i}
    \end{align}
    be the restriction of $T_k^{(1)}$ to the target space for particular incident and outgoing states, and let
    \begin{align}
        T^{i\to f} = \sum\limits_{k=1}^N T_k^{i \to f} \bigotimes\limits_{\ell \neq k} I_\ell.
    \end{align}
    Then, rewriting Eq.~\eqref{eq:pPerpSimplified}, we have
    \begin{align}
        p_\perp = \alpha^2 \sum\limits_f \Var\left(T^{i\to f}\right) + O(\alpha^3).
    \end{align}
    For brevity, we omit the $O(\alpha^3)$ term from here on.

    Lemma~\ref{lem:VarianceBoundVector} gives us
    \begin{align}
        \sum\limits_f \Var\left(T^{i\to f}\right) \leq \Avg\limits_k\left( \nnorm{\sum\limits_f \left(T_k^{i \to f}\right)^\dagger T_k^{i \to f} }_\infty \right) \left( N + 2 N^2 \sqrt{ \Avg\limits_k\left(s_k\right) } \right).
    \end{align}
    Moreover, since the variance of $T$ is not affected by changing the $T_k$ by multiples of the identity, we must have
    \begin{align}
        \sum\limits_f \Var\left(T^{i\to f}\right) \leq \Avg\limits_k\left( \nnorm{\sum\limits_f \left(T_k^{i \to f} - c_k^f I_k \right)^\dagger \left(T_k^{i \to f} - c_k^f I_k\right) }_\infty \right) \left( N + 2 N^2 \sqrt{ \Avg\limits_k\left(s_k\right) } \right).
    \end{align}
    for any set of complex values $\{c_k^f\}$ indexed by $k$ and $f$. Conversely, the maximum orthogonal scattering probability achievable from any state of $\mathcal{H}_k$ alone is
    \begin{align}
        p_{\perp, k}^{\rm max} &= \alpha^2 \max\limits_{\rho_k} \left( \sum\limits_f \Var\left(T_k^{i\to f}\right) \right).
    \end{align}
    These can be related through Lemma~\ref{lem:MinimaxLemma}, proven in the last subsection of this appendix, which gives
    \begin{align}
        \max\limits_{\rho_k} \left( \sum\limits_f \Var\left(T_k^{i\to f}\right) \right) = \min_{\{c_k^f \in \mathbb{C}\}} \nnorm{ \sum\limits_f \left(T_k^{i\to f}-c_k^f I_k\right)^\dagger\left(T_k^{i\to f}-c_k^f I_k\right) }_\infty.
    \end{align}
    We thus have
    \begin{align}
        p_\perp \leq \Avg\limits_k\left( p_{\perp,k}^{\rm max} \right) \left( N + 2 N^2 \sqrt{ \Avg\limits_k\left(s_k\right) } \right).
    \end{align}
\end{proof}

%

\subsection{Lemma \ref{lem:TraceDistanceBound}} \label{subapp:TraceDistanceBoundProof}

\begin{manuallemma}{5a}
    For any scattering process of the same form as in Lemma~\ref{lem:OrthogonalScatterProbTotFull}, the trace distance between the final states in the presence ($\rho_f'$) and absence ($\rho_f$) of the incident particle is upper bounded by
    \begin{align}
        \TrD{\rho_f'}{\rho_f} \leq \sqrt{p_\perp^{i\to i}} + \sqrt{p_\perp p_s} \label{eq:TraceDistanceBoundAppendix}
    \end{align}
    where $p_\perp^{i\to i}$ is the part of $p_\perp$ coming from $\ket{f} = \ket{i}$ (i.e. fully forward elastic scattering of the incident particle) and $p_s$ is the total probability of scattering to an outgoing particle state different from the incident particle state.
    \label{lem:TraceDistanceBoundAppendix}
\end{manuallemma}

\begin{proof}
    We work in the interaction picture, so $\rho_f = \ket{t}\bra{t}.$ Now let $\{\ket{g}\}$ be an orthonormal basis of outgoing states containing $\ket{i}$, and let
    \begin{align}
        S_g = \bra{g}S\ket{i}
    \end{align}
    be a set of Kraus operators \cite{Watrous}. Note that this is a decomposition over final states of the outgoing particle, rather than a decomposition over subsystems, as we do in most of this work. Then we have
    \begin{align}
        \rho_f' = \sum_g S_g \rho S_g^\dagger. \label{eq:RhoFPSgSum}
    \end{align}
    We decompose this into $\rho_f' = \sigma_0 + \sigma_1$ where
    \begin{subequations}
        \begin{align}
            \sigma_0 &= S_i \rho S_i^\dagger \\
            \sigma_1 &= \sum_{g \neq i} S_g \rho S_g^\dagger.
        \end{align}
    \end{subequations}
    Note that $\sigma_0$ and $\sigma_1$ are not density matrices, as they have trace less than one; in particular, $\Tr[\sigma_1] = p_s$. We therefore also define their normalized forms,
    \begin{subequations}
        \begin{align}
            \rho_0 &= \frac{\sigma_0}{1-p_s} \\
            \rho_1 &= \frac{\sigma_1}{p_s}.
        \end{align}
    \end{subequations}

    We likewise divide the orthogonal scattering probability, $p_\perp = \Tr[\rho'(1-\rho)] = 1 - \bra{t}\rho_f'\ket{t}$ into
    \begin{subequations}
        \begin{align}
            p_\perp^{(0)} =  p_\perp^{i\to i} &= \Tr[\sigma_0(1-\rho)] \\
            p_\perp^{(1)} &= \Tr[\sigma_1(1-\rho)].
        \end{align}
    \end{subequations}

    Then, since the trace distance is convex under state mixture, 
    \begin{align}\begin{split}
        \TrD{\rho_f'}{\rho_f} &\leq (1-p_s)\TrD{\rho_0}{\rho_f} + p_s\TrD{\rho_1}{\rho_f} \\
        &\leq (1-p_s)\sqrt{\frac{p_\perp^{(0)}}{1-p_s}} + p_s\sqrt{\frac{p_\perp^{(1)}}{p_s}} \\
        &\leq \sqrt{p_\perp^{(0)}} + \sqrt{p_\perp p_s} \label{eq:TraceDistanceBoundAppendixEnd}
    \end{split}\end{align}
    where in the second line we have taken advantage of the purity of $\rho_f$ and the Fuchs-van de Graaf inequalities \cite{FuchsVDG},
    \begin{align}
        \TrD{\rho_1}{\rho_f} &\leq \sqrt{1 - \bra{t}\rho_1\ket{t}}.
    \end{align}
    (In fact, Equation~\eqref{eq:TraceDistanceBoundAppendixEnd} can be marginally strengthened by replacing $p_\perp$ with $p_\perp^{(1)}$, but we will not use this fact.)
\end{proof}

%

\subsection{Lemma \ref{lem:SingleParticleGeneralLimit}} \label{subapp:SingleParticleGeneralLimitProof}

Before presenting a proof of the generalized Lemma~\ref{lem:SingleParticleGeneralLimit}, we briefly discuss the meaning of the power $\pi$ and type I error $\epsilon$ which are set to some non-zero constant within it. Consider an experiment in which a target is observed for some period of time before its state is read out. We assume in this section that at most one particle could have been incident during this period, such that the two possible final states are $\rho_f$ (if there was not an incident particle) and $\rho_f'$ (if there was one). The target is measured in some way, giving either a positive or negative result, based on which the observer attempts to determine whether the target's final state was $\rho_f$ or $\rho_f'$. The power $\pi$ is the probability that, conditional on there being an incident particle, the measurement (correctly) gives a positive result. Conversely, the type I error $\epsilon$ is the probability, conditional on there \textit{not} being an incident particle, that the the measurement nonetheless gives a positive result. Below we show that our methods allow us to bound the quantity $\pi - \epsilon$. Note that this quantity must be positive for any meaningful measurement, since one can achieve any $\pi = \epsilon$ simply by choosing the outcome randomly (positive with probability $\pi = \epsilon$), independent of the observation.

From a frequentist perspective, the type I error is simply the p-value threshold. A typical notion of ``expected sensitivity'' is the edge of where one can see 50\% of signals at the given confidence level, corresponding to $\pi = 1/2$ and $\epsilon \ll 1$. One therefore requires $\pi - \epsilon \sim 0.5$ at the projected sensitivity limit.

Translating into a Bayesian update requires one additional input: the prior $q$ on there being an incident particle. From there, however, it is straightforward to calculate posteriors, conditional on the observer measuring that the final state was $\rho_f$ or $\rho_f'$:
\begin{subequations}
\begin{align}
    P({\rm signal~present}|{\rm measure~}\rho_f) &= \frac{q(1-\pi)}{q(1-\pi) + (1-q)(1-\epsilon)} \\
    P({\rm signal~present}|{\rm measure~}\rho_f') &=  \frac{q\pi}{q\pi + (1-q)\epsilon}.
\end{align}
\end{subequations}
Both posteriors go to $q$, i.e. the update becomes increasingly weak, as $\pi - \epsilon \to 0$.

In the proof below (as well as the subsequent proof of Theorem~\ref{thm:MultipleParticleLimitFull}) we will simply require $\pi-\epsilon$ to be some chosen positive constant, remaining agnostic to its specific value.

\begin{manuallemma}{6a}
    For any chosen type I error rate and power, no experiment searching for a single particle scattering at $O(\alpha)$ from an $N$-particle detector can have a detection limit better than
    \begin{align}
        \frac{1}{\alpha_{\rm min}^2} = O\left( N + N^2 \sqrt{\EMF} \right).
    \end{align}
    Moreover, if a scattering process has $p_\perp^{i\to i} = 0$ for all pure state components of the initial state (i.e. if fully forward elastic scattering has no effect on the target), then no experiment can detect an effect below $O(\alpha^2)$ and, at that order, the best possible scaling of the detection limit is
    \begin{align}
        \frac{1}{\alpha_{\rm min}^2} = O\left( N^{3/2} + N^2 \EMF^{1/4} \right).
    \end{align}
    \label{lem:SingleParticleGeneralLimitFull}
\end{manuallemma}

\begin{proof}
    Suppose we perform some experiment that attempts to determine whether or not a particle was incident on a target. To do so, we must be able to distinguish the final state $\rho_f'$ of the target which is attained in the presence of such an incident particle from the final state $\rho_f$ attained if there is no such incident particle. Note that this is distinct from the question of whether or not a scattering process actually occurred which, as we discuss in the main text, may be basis-dependent. Instead, we absorb the notion of ``scattering probability'' into the \textit{mixed} final state $\rho_f'$. The question of what values of $\alpha$ can be detected is thus equivalent to the question of what values of $\alpha$ are sufficient for $\rho_f'$ to be distinguished from $\rho_f$.

    Lemmas~\ref{lem:OrthogonalScatterProbTotFull} and~\ref{lem:TraceDistanceBoundAppendix} together give us a bound on the trace distance between $\rho_f$ and $\rho_f'$ (assuming for now that the initial state is pure; we return to mixed states below). Translating this into a bound on detection limits requires one to choose error rates, since these affect the resulting limits. If we require a power $\pi$ and a type I error rate $\epsilon$, a standard generalization of Helstrom's Theorem (see e.g. Ref. \cite{Wilde}) tells us that we must have
    \begin{align}
        \frac{1}{2}||\rho_f - \rho_f'||_1 \geq \pi - \epsilon,
    \end{align}
    for any possible measurement strategy (i.e. any positive operator-valued measure). This gives a lower bound on $d_B(\rho_f, \rho_f')$, and thus a lower bound on $\alpha$ for a given scaling of the scattering probability with $\alpha$, $N$, and $\EMF$.

    We begin by considering pure initial states, such that our results about orthogonal scattering probabilities are applicable. For experiments operating at $O(\alpha)$, only the $\sqrt{p_\perp^{(0)}}$ component of Eq.~\eqref{eq:TraceDistanceBoundAppendix} contributes. Then since $p_\perp^{(0)} \leq p_\perp = O(\alpha^2 N + \alpha^2 N^2 \sqrt{ \Avg_k(s_k) })$, we have
    \begin{align}
        \pi - \epsilon \leq C \sqrt{\alpha^2 N + \alpha^2 N^2 \sqrt{ \Avg_k(s_k) })}
    \end{align}
    for some constant $C$, and thus
    \begin{align}
        \frac{1}{\alpha_{\rm min}^2} = O\left( N + N^2 \sqrt{ \Avg\limits_k\left(s_k\right) } \right).
    \end{align}

    Experiments where $p_\perp^{(0)} = 0$ must instead rely on the $\sqrt{p_\perp p_s}$ term (or on terms at $O(\alpha^3)$ or above). Then, since $p_s \lesssim \alpha^2 N^2$, we have $p_\perp p_s = O(\alpha^4 N^3 + \alpha^4 N^4 \sqrt{ \Avg_k(s_k) })$ and thus
    \begin{align}
        \frac{1}{\alpha_{\rm min}^2} = O\left( N^{3/2} + N^2 \left( \Avg\limits_k\left(s_k\right) \right)^{1/4} \right).
    \end{align}

    Up to this point we have restricted ourselves to pure initial states. Now recall that the trace distance is linear under scalar multiplication and satisfies the triangle inequality; thus the trace distance between a mixture of pure states and its corresponding final state is upper bounded by the initial probability-weighted average of the trace distances for each initial pure state component. We can thus replace $\Avg_k(s_k)$ with $\EMF$ in the inequalities above, just as we did in the proof of Lemma~\ref{lem:QFIBoundFull}, giving our desired results.
\end{proof}

%

\subsection{Theorem \ref{thm:MultipleParticleLimit}} \label{subapp:MultipleParticleLimitProof}

\begin{manualtheorem}{7a}   
    For any chosen type I error rate and power, no experiment searching for $K$ independent incident particles scattering at $O(\alpha)$ from an $N$-particle detector can have a detection limit better than
    \begin{align}
        \frac{1}{\alpha_{\rm min}^2} = O\left( K^2 N + K^2 N^2 \sqrt{\EMF^\MAX} \right),
    \end{align}
    with $\EMF^\MAX$ the largest value of $\EMF$ achieved over the integration time. Moreover, if a scattering process has $p_\perp^{i\to i} = 0$ for all pure state components of the initial state (i.e. if fully forward elastic scattering has no effect on the target), then no experiment can detect an effect below $O(\alpha^2)$ and, at that order, the best possible scaling of the detection limit is
    \begin{align}
        \frac{1}{\alpha_{\rm min}^2} = O\left( K N^{3/2} + K N^2 \left(\EMF^\MAX\right)^{1/4} \right).
    \end{align}
    \label{thm:MultipleParticleLimitFull}
\end{manualtheorem}

\begin{proof}
    We prove the $O(\alpha)$ case; the proof for $O(\alpha^2)$ is essentially identical. Let $\rho_f^{(K)}$ be the state after $K$ incident particles, with $\rho_f^{(0)} = \rho_f$. Then the triangle inequality on trace distances \cite{Watrous} ensures that
    \begin{align}
        \frac{1}{2}||\rho_f^{(K)} - \rho_f||_1 &\leq \sum\limits_{\ell=1}^K \frac{1}{2}||\rho_f^{(\ell)} - \rho_f^{(\ell-1)}||_1.
    \end{align}
    We showed in the proof of Lemma~\ref{lem:SingleParticleGeneralLimitFull} that each of these terms is bounded by
    \begin{align}
        \frac{1}{2}||\rho_f^{(\ell)} - \rho_f^{(\ell-1)}||_1 &\leq O\left( \alpha\sqrt{N} + \alpha N \sqrt[4]{\EMF^{(\ell-1)}} \right),
    \end{align}
    where $\EMF^{(\ell-1)}$ is the value of $\EMF$ after $\ell-1$ incident particles. For a simple bound, we simply take the maximum achieved value of $\EMF$, in which case we have
    \begin{align}
        \frac{1}{2}||\rho_f^{(K)} - \rho_f||_1 &= O\left( K\alpha\sqrt{N} + K\alpha N \sqrt[4]{\EMF^\MAX} \right).
    \end{align}
    From here the proof is exactly as in Lemma~\ref{lem:SingleParticleGeneralLimitFull}.
\end{proof}

%

\subsection{Lemma 8} \label{subapp:MinimaxLemma}

The proof of Lemma~\ref{lem:OrthogonalScatterProbTotFull} requires a minor generalization of Theorem~9 of Ref.~\cite{AUDENAERT20101126}:

\begin{manuallemma}{8}
    Let $\{G^a\}$ be a finite, $a$-indexed collection of operators on a finite-dimensional Hilbert space $\mathcal{H}$. Then
    \begin{align}
        \min_{\{c^a \in \mathbb{C}\}} \nnorm{ \sum\limits_a (G^a-c^a I)^\dagger(G^a-c^a I) }_\infty = \max\limits_\rho \sum_a \Var_\rho(G^a)
    \end{align}
    where the minimum is over collections of complex numbers, also indexed by $a$, while the maximum is over the state $\rho$ of $\mathcal{H}$ used in the variance.
    \label{lem:MinimaxLemma}
\end{manuallemma}

\begin{proof}
    The proof is closely analogous to that in Ref.~\cite{AUDENAERT20101126}. Let
    \begin{align}
        F(\rho,\{c^a\}) = \sum_a \Tr\left[ \rho (G^a-c^a I)^\dagger(G^a-c^a I) \right].
    \end{align}
    For fixed $\{c^a\}$, we have
    \begin{align}
        \max\limits_\rho F(\rho,\{c^a\}) = \nnorm{ \sum_a (G^a-c^a I)^\dagger(G^a-c^a I) }_\infty
    \end{align}
    while, for fixed $\rho$, we have
    \begin{align}
        \min_{\{c^a\}} F(\rho,\{c^a\}) = \sum\limits_a \Var_\rho(G^a),
    \end{align}
    obtained when $c^a = \Tr[\rho G^a]$.

    $F$ is linear in $\rho$ and convex in $\{c^a\}$, and the space of density matrices is compact, but to apply the minimax theorem (see e.g. Ref.~\cite{Sion1958}), we need one additional requirement: a restriction of the $\{c^a\}$ to a compact set. Let
    \begin{align}
        M \equiv \nnorm{ \sum\limits_a \left(G^a\right)^\dagger G^a }_\infty.
    \end{align}
    Some straightforward algebra shows that
    \begin{align}
        \max\limits_\rho F(\rho,0) = M
    \end{align}
    while, for $\nnorm{c}^2 = \sum_a |c^a|^2 > 4M$,
    \begin{align}
        \max\limits_\rho F(\rho,c^a) > M,
    \end{align}
    so the minimum of the maximum always occurs for $\nnorm{c} \leq 2\sqrt{M}$. Likewise, the fixed-$\rho$ minima found above occur at $\nnorm{c}^2 = \sum_a | \Tr[\rho G^a] |^2 \leq M$. We can thus choose a ball of radius $2\sqrt{M}$ for the $\{c^a\}$, providing our compact set.

    We can therefore apply the minimax theorem to $F$, giving
    \begin{align}
        \max\limits_\rho \min_{\{c^a\}} F(\rho,\{c^a\}) = \min_{\{c^a\}} \max\limits_\rho F(\rho,\{c^a\})
    \end{align}
    and thus
    \begin{align}
        \min_{\{c^a \in \mathbb{C}\}} \nnorm{ \sum\limits_a (G^a-c^a I)^\dagger(G^a-c^a I) }_\infty = \max\limits_\rho \sum_a \Var_\rho(G^a).
    \end{align}
\end{proof}

\section{Indistinguishable Bosons}\label{app:IndistinguishableBosons}

Although our main text results can in principle be applied to systems of indistinguishable bosons, their interpretation in this case can become rather subtle, as the meaning of $N$---as defined by the way in which the Hilbert space, interaction, etc. factor; see Equations~\eqref{eq:Hdefinition} and~\eqref{eq:Gdefinition}---may correspond to the number of distinct bosonic modes, as opposed to the number of particles (e.g. atoms) in the detector. We illustrate this with two closely-related examples:

First, consider an atom interferometer using $N$ spin-1 atoms, which begins with them in a twin-Fock state
\begin{align}
    \ket{t} = \ket{N_{m_F=+1} = \frac{N}{2}, N_{m_F=0}=0, N_{m_F=-1}=\frac{N}{2}},
\end{align}
where $N_{m_F = +1}$ is the occupation number of the mode with spin state $m_F = +1$, and likewise for $m_F = 0,-1$. We assume that the atoms and their spatial wavefunctions are identical (i.e. Bose-Einstein condensation), differing only in their spin states, so these three occupation numbers completely specify the state.

As written, this is not an entangled state---it's simply the product state of three Fock states of different modes---and yet it has a QFI with respect to rotations of the spins (e.g. $J_x$) that scales with $N^2$ (see e.g. Ref. \cite{InterferometerCoherenceReview}). This appears to be a contradiction of our Lemma~\ref{lem:QFIBound}, but it is not: the factorization of the Hilbert space is over three modes, not over the $N$ atoms (i.e. this system actually has $N=3$ in the language of the main text, independent of the number of atoms), and moreover the interaction mixes these modes (because it transfers atoms from one mode to another, rather than acting on one mode at a time). Our results are thus not directly applicable to this case.

Interestingly, however, this inapplicability can be remedied using an alternative description of this system: A system of $N$ indistinguishable bosons, each individually in Hilbert space $\mathcal{H}_1$, lives in the symmetric subspace ${\rm Sym}^N(\mathcal{H}_1) \subset \mathcal{H}_1^N$. We can therefore choose to work in the larger space $\mathcal{H}_1^N$, i.e. to treat the atoms as distinguishable, and simply restrict to symmetric states and interactions; this is equivalent to working in ${\rm Sym}^N(\mathcal{H}_1)$ to begin with, and is not an uncommon choice in existing literature (see e.g. Refs. \cite{InterferometerCoherenceReview, Luo:2017nnb, Killoran:2014snn}). Interactions that transfer one atom from one mode to another in the description above (e.g. our rotation $J_x$) can then be written in the factored form of Eq.~\eqref{eq:Gdefinition}, simply by treating that transfer as a change in the state of a particular atom and then summing over possible atoms. From there, our results can be applied without issue, with $N$ now referring to the number of atoms rather than the number of modes; the state above, for example, has $\rho_k = {\rm diag}(1/2,0,1/2)$ and thus $s_k = \ln 2$ in this language, and therefore the $N^2$ scaling of the QFI is entirely consistent with our bounds. The only tradeoff for this is that $s_k$ becomes a somewhat less interpretable object, since it is specific to the embedding and not a property of a single physical particle.

In some contexts this prescription is somewhat more awkward, however. A multimode electromagnetic cavity prepared in the state
\begin{align}
    \ket{t} = \ket{N_1 = \frac{N}{2}, N_2=\frac{N}{2}},
\end{align}
with $N_{1,2}$ the occupation numbers of two different modes, is again in an unentangled state but can have an $O(N^2)$ QFI with respect to a Hamiltonian such as $a^\dagger b + {\rm h.c.}$, with $a^\dagger$ the raising operator on mode 1 and $b$ the lowering operator on mode 2. Although this is entirely analogous to the atomic case above, and can be handled in the same way (i.e. by treating individual photons as distinguishable), in the photon case this becomes rather unnatural. One can therefore choose to either use this atypical description or to accept that interactions coupling distinct photon modes do not satisfy the conditions for our bounds to hold.

This issue does not arise for indistinguishable fermions because each fermion must necessarily be in its own state. One can therefore simply index the fermions by those distinct states and treat them as distinguishable; this generally prevents large coherent enhancements for elastic scattering, as distinct initial states lead to distinct final states (at a given momentum transfer), but it allows for coherent enhancements when internal degrees of freedom (e.g. spin) are present, with the spins treated as distinguishable thanks to the distinct position/momentum states of the particles.

\section{Bound State Entanglement}\label{app:BoundStateEntanglement}

Perhaps the most widespread form of entangled states that can produce coherent enhancements is bound states. Although this is not often recognized, any quantum mechanical bound state is necessarily an entangled state: in an unentangled multiparticle pure state (i.e. a product state), the particles are uncorrelated and consequently independent of one another, precluding any binding interaction between them.

We can describe any bound state of two distinguishable particles (omitting any internal degrees of freedom of the individual particles) in 1D by its momentum-space wavefunction
\begin{equation}
    \ket{\psi} = \int dP \int d\Delta p\, \phi_\tot(P) \phi_{\rm rel}(\Delta p) \ket{\frac{P}{2}+\Delta p}_1 \ket{\frac{P}{2}-\Delta p}_2
\end{equation}
where $P$ is the total momentum, $\Delta p$ is a relative momentum, and the ket subscripts label the two particles. One can immediately see that for most (although not all) choices of $\phi_\tot(P)$ and $\phi_{\rm rel}(\Delta p)$, knowledge of one particle's momentum gives information about the other particle's momentum, and thus the particles are entangled.

As a simple example, consider states where $\phi_\tot(P)$ and $\phi_{\rm rel}(\Delta p)$ are normalized gaussians with variances $\Var(P)$ and $\Var(\Delta p)$, respectively. We illustrate this situation qualitatively as one of our examples in Appendix~\ref{app:IntuitivePictures}. Evaluating the entropy explicitly is possible via Lemma~\ref{lem:GaussianEntanglement} below: in particular, when $\Var(\Delta p) \gg \Var(P)$---as is often the case when, for example, a bound state is at a low temperature compared to its internal energy scales---this gives a per-particle von Neumann entanglement entropy of
\begin{equation}
    s_k \gtrsim 1 + \frac{1}{2}\ln\pfrac{\Var(\Delta p)}{4\Var(P)}.
\end{equation}

Realistic physical systems are of course often not gaussian, but the non-gaussian generalization will be similarly entangled. For any system, being bound necessarily requires entanglement, and this allows bound states to give coherent enhancements in a wide variety of contexts---such as the CEvNS example in the main text.

\begin{manuallemma}{9}
    Consider an $N$-particle gaussian wavefunction (i.e. $\psi(\mathbf{p}) \propto \exp(-\mathbf{p}^T A \mathbf{p}/2)$ for some matrix $A$, with $\mathbf{p}$ the vector of all constituent momenta) with center-of-mass position (momentum) $X$ ($P$), and let $x$ ($p$) be the position (momentum) of any chosen constituent particle (along a chosen axis, if working in more than 1D). Then, if $\Var(p) \gg \Var(P)$, the von Neumann entanglement entropy of the chosen particle is
    \begin{align}
        s \gtrsim 1 + \frac{1}{2}\ln\left(\frac{{\rm Var}(p)}{4{\rm Var}(P)}\right)
    \end{align}
    in one spatial dimension, summed over the spatial dimensions if there are multiple.
    \label{lem:GaussianEntanglement}
\end{manuallemma}

\begin{proof}
    Since our state is gaussian, the reduced state of our single chosen constituent particle, after tracing out the $N-1$ others, is necessarily also gaussian. Crucially, it can carry no covariance between $x$ and $p$: Since we chose our wavefunction to be gaussian with no linear bias, it is real-valued in both position and momentum space. Now note that ${\rm Cov}(x,p) = {\rm Re}(\expect{xp} - \expect{x}\expect{p})$, but $\expect{p}=0$ and $p = -i\partial/\partial x$ so $\expect{xp} = -i\int \psi(x) x \psi'(x)$ is purely imaginary, and therefore ${\rm Cov}(x,p) = 0$. Thus the covariance matrix of our reduced state is simply ${\rm diag}({\rm Var}(x), {\rm Var}(p))$, with symplectic eigenvalue $\nu = \sqrt{{\rm Var}(x){\rm Var}(p)}$. 
    
    It is known that the von Neumann entropy of a gaussian state with symplectic eigenvalues $\nu_k$ is 
    \begin{align}
        s = \sum_k \left(\nu_k + \frac{1}{2}\right) \ln\left(\nu_k + \frac{1}{2}\right) - \left(\nu_k - \frac{1}{2}\right) \ln\left(\nu_k - \frac{1}{2}\right)
    \end{align}
    (see e.g. Refs.~\cite{Weedbrook:2011wxo, Holevo:1999sqc}, although note the varying factor of 2 conventions).

    Now note that we can always write our full state as $\ket{\psi_{\rm CM}} \otimes \ket{\psi_{\rm rel}}$, with $\ket{\psi_{\rm CM}}$ the wavefunction of the center-of-mass and $\ket{\psi_{\rm rel}}$ the wavefunction of the internal degrees of freedom; this factorization must hold since the binding potential is a function of relative coordinates, not of the center-of-mass state. Then the variance of $x$ is equal to the variance of $X$ plus the variance of the relative position $x-X$, and thus ${\rm Var}(x) \geq {\rm Var}(X)$. But the uncertainty principle gives $\sqrt{{\rm Var}(X){\rm Var}(P)} \geq 1/2$. Putting these together, we have
    \begin{align}
        \nu \geq \sqrt{\frac{{\rm Var}(p)}{4{\rm Var}(P)}}
    \end{align}
    for our chosen particle and axis. In the frequently-occurring case ${\rm Var}(p) \gg {\rm Var}(P)$ (as in the nuclear example above) and thus $\nu \gg 1$, we can then approximate
    \begin{align}
        s = \sum_k \ln(\nu_k) + 1 + O\left(\frac{1}{\nu_k^2}\right)
    \end{align}
    so each spatial dimension contributes
    \begin{align}
        s \gtrsim 1 + \frac{1}{2}\ln\left(\frac{{\rm Var}(p)}{4{\rm Var}(P)}\right)
    \end{align}
    to the single-particle von Neumann entanglement entropy. Note in particular that this has no $N$-dependence, and thus the entanglement entropy of each particle is (at least approximately) constant as one increases the system size, in spite of only imposing a single constraint---small total momentum uncertainty---on the growing bound state.
\end{proof}

\section{Intuitive Pictures of the Effects}\label{app:IntuitivePictures}

In this appendix, we provide some intuition for the bounds presented in the main text, which might otherwise appear somewhat opaque. Coherent enhancements in general are familiar: they occur whenever the final states resulting from interactions with each of the $N$ individual target or detector particles have significant overlap, such that one should sum amplitudes rather than probabilities. In this work, however, we focus on processes that are not only coherent, but also detectable; in particular, we require the final state (after the interaction or scattering) to be distinguishable from the initial state (i.e. the state in the absence of an interaction or scattering, since we work in the interaction picture of quantum mechanics) with non-vanishing probability in the large-$N$ limit. In this appendix, we will refer to the requirement of $N^2$ scaling as ``coherence'' and the requirement of $O(1)$ difference between initial and final states as ``detectability'' for brevity, treating both as binary (yes/no) rather than matters of degree for simplicity.

In the rest of this appendix, as well as in the figure captions, we describe the processes in terms of scattering for concreteness. All of the discussion is essentially unchanged for a parameter estimation setup however, with the question of which target particle a scattering contribution arose from replaced by the question of which contribution to the system's evolution (i.e. which $H_k$) is being considered. The only exception to this is the discussion of entanglement accumulation late in this appendix, as that cannot occur for coherent evolution: each $H_k$ affects its corresponding subsystem independently, and thus $H$ cannot produce entanglement between them; scattering differs in this regard because there is an interaction between each subsystem and the (shared) scattered particle, which acts as an ancilla and can therefore lead to changes in $s_k$.

We first note that many of the most familiar coherent processes are not detectable (in the $O(1)$ sense), and likewise many detectable processes are not coherent. We illustrate this distinction in Table~\ref{tab:IntuitionTable}, which considers a process that transfers some fixed momentum $\Delta p$ to either of $N=2$ particles in 1D. (This is the motivating example in the main text.) In the first row, we present a standard incoherent process (e.g. hard scattering): the momentum transfer is much larger than any momentum uncertainty in the initial state, and thus wavefunctions resulting from interacting with the $k=1$ versus $k=2$ particles are both essentially orthogonal to both the initial state and each other. (In the absence of decoherence, the final state will be a superposition of the blue and green states; we refer to them as separate final states, however, to emphasize that they result from different terms in the separable interaction.) The change in target state is thus reliably detectable (at least in principle), but there is no coherent enhancement from constructive interference. 

Conversely, in the second row, we consider a fairly typical coherent process (e.g. very soft scattering). In this case the two possible final states are both nearly identical to the initial state. They thus also have high overlap with one another---resulting in coherence---but a single scatter can no longer be detected reliably. Broadly speaking, one has two options in this case: One can perform measurements sensitive to the full $O(N^2)$ scattering rate---e.g. by measuring the total momentum of the target $\sum_k p_k$, which changes expectation every scatter---but this leads to a trace distance per scatter that falls as $O(N^{-1/2})$ due to the increasing variance of $\sum_k p_k$ with $N$. Alternatively one may sometimes be able to perform measurements based on momenta not present in the initial distribution---e.g. values of $p_k$ just above some cutoff present in the initial state, perhaps due to a finite trapping potential---but such contributions will occur at only an $O(N)$ rate. Thus the former strategy is parametrically superior to both the latter and to the incoherent scattering process of the first row, offering a trace distance increase per incident particle of $O(N^{3/2})$, rather than $O(N)$. Note that, in spite of the large overlap between the initial and final target states in this case, the incident and outgoing states of the scattered particle may be orthogonal, and thus one might reliably detect that this process occurred using a separate detector. What we are considering, however, is the (final) detector itself, in which detectability of the target/detector state change itself is necessary.

As we show in the main text, achieving coherence and $O(1)$ detectability simultaneously requires entanglement. An example of this is presented in the third row of Table~\ref{tab:IntuitionTable}, in which the two particles are entangled such that they have large individual momentum uncertainties but little uncertainty in their total momentum. The possible final states thus exhibit precisely the desired features: they overlap substantially with each other, but not with the initial state.

We emphasize that the directions along which the momentum-space probability distribution is stretched versus squeezed are important. If one were to instead elongate it vertically, giving a substantial momentum uncertainty to one particle but not the other, the resulting performance would be very poor: there is no overlap between the two possible final states, and only scattering from one of the particles is reliably detectable. The fact that the interaction gives the same momentum transfer to whichever target/detector particle is involved singles out the $p_1 = p_2 \,(=p_3=...)$ direction, and thus it is crucial that one have little uncertainty along that direction and lots along orthogonal directions.

While the discussion so far has focused on processes that transfer momentum, the intuition for discrete and internal degrees of freedom is essentially identical. We illustrate this using the example of spin transfer to a target consisting of spin-1 particles in Table~\ref{tab:IntuitionTableDiscrete}. In each row, we show the initial distribution of spin states of each particle in the $z$ basis, i.e. indexed by the eigenvalue of $J_z$ for each particle; for simplicity we consider states consisting of equal-amplitude superpositions of these basis states, with no relevant phases between them. The plots thus show simply which basis states are included in the initial and final states. We assume the scattering process transfers one positive unit of spin along $z$ to either of the two particles.

The first three rows of Table~\ref{tab:IntuitionTableDiscrete} are essentially direct analogues of the matching rows in Table~\ref{tab:IntuitionTable}. Although the finite dimension of the Hilbert space results in some order-one changes to the behavior, it does not change any parametric scaling. We additionally include a fourth row in this case, however, illustrating the poor choice of ``direction'' for an initial state discussed above using a highly entangled GHZ-like state. Nonetheless, while entanglement is necessary for simultaneous coherence and detectability, it is not sufficient, and so the state shown here is not effective for detecting the chosen scattering process (although it could be effective for other measurements).

Finally, in Figure~\ref{fig:EntanglementAccumulation}, we illustrate how entanglement can accumulate over the integration time of a scattering detector. As we note in the main text, we are not aware of any situations in which this accumulation is sufficient to generate a parametric change between Lemma~\ref{lem:SingleParticleGeneralLimit} and Theorem~\ref{thm:MultipleParticleLimit}, but some degree of entangled accumulation is in fact very common. As a simple example of this, we consider the same hard scattering process as in Table~\ref{tab:IntuitionTable} (although a soft scattering process works as well; we use hard scattering solely for visual clarity). As Figure~\ref{fig:EntanglementAccumulation} shows, a series of repeated scatters without decoherence causes the target state to gradually become more and more elongated---and along precisely the direction needed to enhance scattering---as the scattering process leaves the uncertainty in $p_1 + p_2$ unchanged but generates uncertainty in $p_1 - p_2$. Notably, although we have illustrated this for a deterministic process with equal momentum transfers in the same direction each time, this is not necessary: a classical distribution of incident particles would lead to a mixed final state, but each of the pure state components of that final state would nonetheless become elongated in this fashion, leading to a similar accumulation of entanglement (now measured by $\EMF$). We show ane example of this in Figure~\ref{fig:EntanglementAccumulationMixed}.

An analogous accumulation can, of course, occur with discrete variables (e.g. spins), analogously to Table~\ref{tab:IntuitionTableDiscrete}. Just as in the momentum case above, these can often be understood as arising from a conserved quantity---in the spin case, angular momentum. Any particular set of outgoing states must correspond to a fixed amount of angular momentum being transferred to the target, but that angular momentum can be distributed across the target particles in many ways. This again leads to no increase in the uncertainty of e.g. $\sum J_z$ (within pure state components of the final state), while increasing the uncertainties in individual particles' spins, giving states similar to the third row of Table~\ref{tab:IntuitionTableDiscrete}.

\def\firstColWidth{0.24\textwidth}	

\newcommand{\splitcell}[2]{%
  \begin{tblr}{
    width=\linewidth,
    colspec={X[c,m]},
    rows={halign=c, valign=m, ht=6em},
    hline{2},
  }
  #1 \\
  #2
  \end{tblr}%
}

\begin{table}[h]
\centering

\begin{tblr}{
  width   = \textwidth,
  colspec = {Q[wd=\firstColWidth] Q[1] Q[1] Q[4]},
  cells = {halign=c, valign=m},
  hlines,
  vline{1,5} = {1-Z}{},
  vline{2-4} = {3-Z}{},
}
\SetCell[c=4]{c} {\textbf{Coherent and/or Detectable Scattering for $N=2$}} \\
\SetCell[c=4]{c} {%
  \legsq{figureRed}\; Before scattering\qquad
  \legsq{figureGreen}\; After scattering from particle 1\qquad
  \legsq{figureBlue}\; After scattering from particle 2
} \\
\textbf{Momentum Distributions} & \textbf{Rate} & \textbf{Trace Dist.} & \textbf{Notes} \\

\begin{minipage}[c]{\firstColWidth}\centering
\includegraphics[width=\linewidth]{HardScatter.pdf}
\end{minipage} & $O(N)$ & $O(1)$ & A typical example of an incoherent process, such as a hard scattering event. Since the momentum transfer is large compared to all momentum uncertainties of the initial state, the final states resulting from interactions with different particles/modes are essentially orthogonal, and thus the process is not coherent, giving an $O(N)$ scattering probability. The change in target state is, however, reliably detectable for any $N$. \\

\begin{minipage}[c]{\firstColWidth}\centering
\includegraphics[width=\linewidth]{SoftScatter.pdf}
\end{minipage} & \splitcell{$O(N^2)$}{$O(N)$} & \splitcell{$O\pfrac{1}{\sqrt{N}}$}{$O(1)$} & A poorly-detectable coherent process, in which the momentum transfer is small compared to all initial momentum uncertainties. While the total scattering probability can scale as $O(N^2)$, detection strategies using this full probability (e.g. by measuring $\sum p$) suffer from a trace distance that is $O(N^{-1/2})$ per scatter, limiting the experimental scaling to $O(N^{3/2})$. Other detection strategies (e.g. looking for large momenta outside of the initial state's support) may have $O(1)$ trace distances, but they rely on incoherent scattering contributions, with rates at only $O(N)$. \\

\begin{minipage}[c]{\firstColWidth}\centering
\includegraphics[width=\linewidth]{EntangledScatter.pdf}
\end{minipage} & $O(N^2)$ & $O(1)$ & An example of simultaneous coherence and high detectability; note the necessary entanglement between the two particles' momenta. The small uncertainty in the total (i.e. center-of-mass) momentum makes the state change reliably detectable at any $N$, but the large uncertainty in the orthogonal momentum direction (i.e. the difference of momenta) makes the different final states highly overlapping, permitting an $O(N^2)$ scattering probability. \\

\end{tblr}

\caption{An illustration of scattering/interaction processes that transfer momentum, illustrating various combinations of coherence and detectability. Here we use ``coherence'' to refer to interaction probabilities/rates that scale as the number of particles squared, and ``detectability'' to refer to processes where the initial and final states are separated by an $N$-independent trace distance, such that an optimal measurement can distinguish them with non-negligible probability in the large-$N$ limit. In each row, the plot qualitatively illustrates the initial (red) probability distribution of each particle's momentum, as well as the distributions after the scatter/interaction happens with the first/second (green/blue) particle; the final state (in the absence of decoherence) will thus be a superposition of the green and blue states. The arrows in each figure give the corresponding momentum transfer, whose size can be compared against the characteristic momentum uncertainties of the various initial states.}
\label{tab:IntuitionTable}
\end{table}

\vspace{-1cm}

\begin{table}[h]
\centering

\begin{tblr}{
  width   = \textwidth,
  colspec = {Q[wd=\firstColWidth] Q[1] Q[1] Q[4]},
  cells = {halign=c, valign=m},
  hlines,
  vline{1,5} = {1-Z}{},
  vline{2-4} = {3-Z}{},
}
\SetCell[c=4]{c} {\textbf{Coherent and/or Detectable Inelastic Scattering from $N=2$ Spin-1 Particles}} \\
\SetCell[c=4]{c} {%
  \legsq{figureRed}\; Before scattering\qquad
  \legsq{figureGreen}\; After scattering from particle 1\qquad
  \legsq{figureBlue}\; After scattering from particle 2
} \\
\textbf{Spin Distributions} & \textbf{Rate} & \textbf{Trace Dist.} & \textbf{Notes} \\

\begin{minipage}[c]{\firstColWidth}\centering
\includegraphics[width=\linewidth]{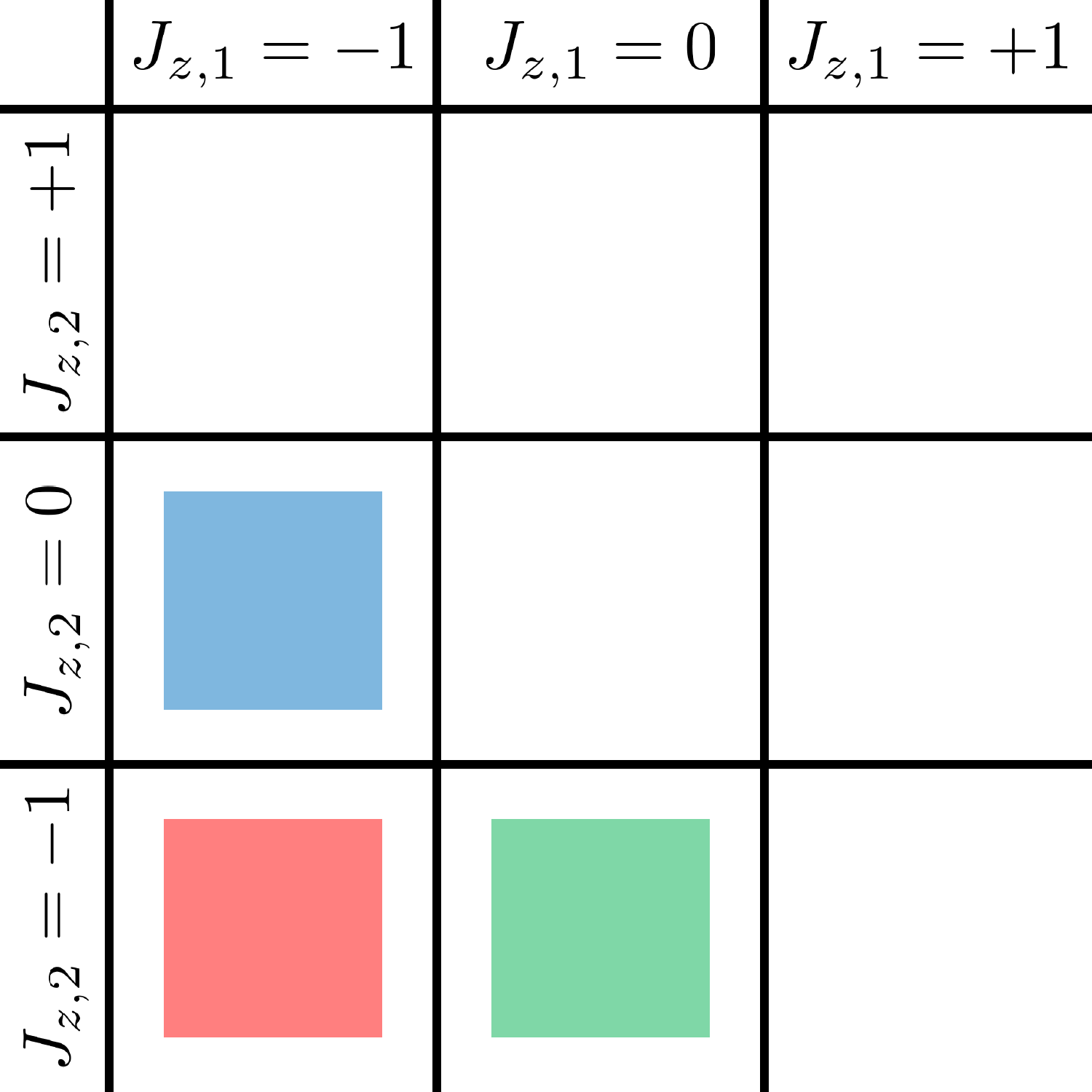}
\end{minipage} & $O(N)$ & $O(1)$ & A simple example of an incoherent inelastic scattering process from the $\ket{-1}^{\otimes2}$ state (i.e. both target spins initially in $J_z = -1$). Since both particles are in spin eigenstates, the two possible final states are orthogonal both to the initial state and to each other, preventing coherence. Thus, while the initial and final states are reliably distinguished at any $N$, the scattering rate scales only as $O(N)$. \\

\begin{minipage}[c]{\firstColWidth}\centering
\includegraphics[width=\linewidth]{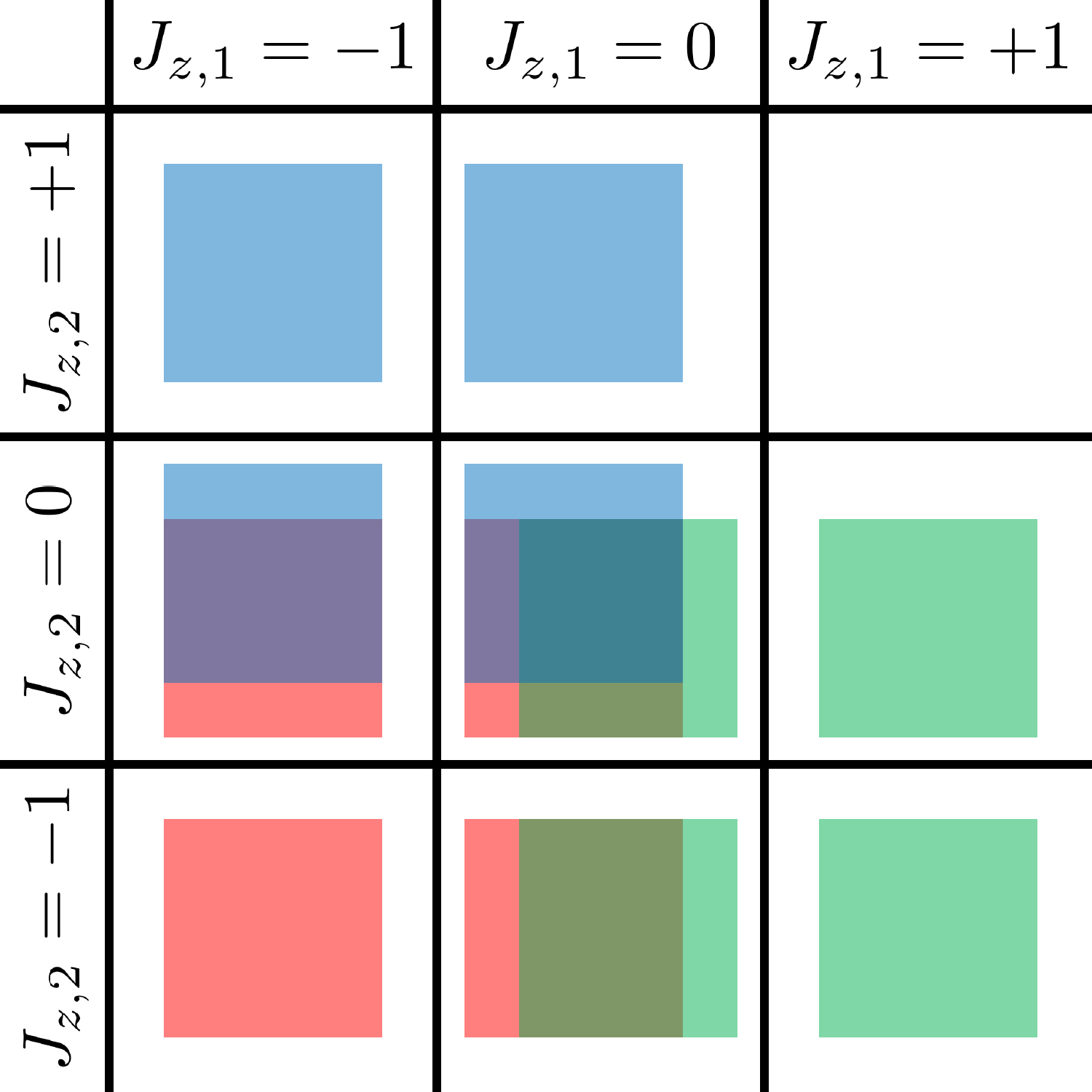}
\end{minipage} & \splitcell{$O(N^2)$}{$O(N)$} & \splitcell{$O\pfrac{1}{\sqrt{N}}$}{$O(1)$} & An example of a (partially) coherent process, in which scattering from a product state of spin superpositions, $(\ket{-1}+\ket{0})^{\otimes 2}/2$, leads to final states that have incomplete overlap with both the initial state and with each other, resulting in partial coherence and partial detectability. This is loosely analogous to the second row of Table~\ref{tab:IntuitionTable}, but the finite dimension of this Hilbert space reduces the final state overlaps. Nonetheless it has the same parametric scaling: an $O(N^2)$ scattering rate that contributes $O(N^{-1/2})$ trace distance, with $O(N)$ scattering rate into states that differ by $O(1)$. \\

\begin{minipage}[c]{\firstColWidth}\centering
\includegraphics[width=\linewidth]{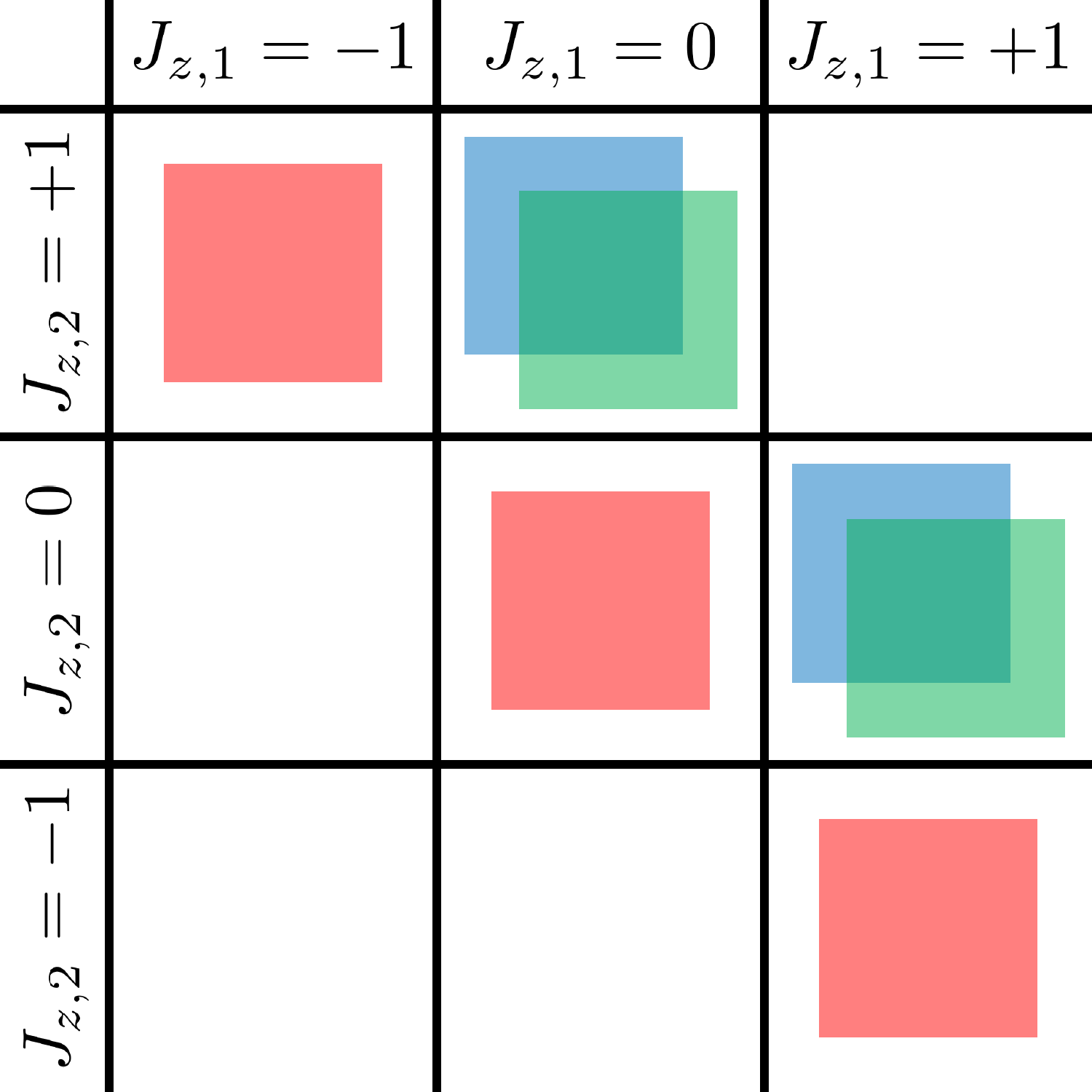}
\end{minipage} & $O(N^2)$ & $O(1)$ & An example of scattering that is both coherent and detectable, thanks to the entanglement of the initial state, $(\ket{+1}\ket{-1} + \ket{0}^{\otimes2} + \ket{-1}\ket{+1})/\sqrt{3}$. This is a straightforward, discrete analogue of the third row in Table~\ref{tab:IntuitionTable}, although the finite dimension of the spin Hilbert space changes the details of the behavior. \\

\begin{minipage}[c]{\firstColWidth}\centering
\includegraphics[width=\linewidth]{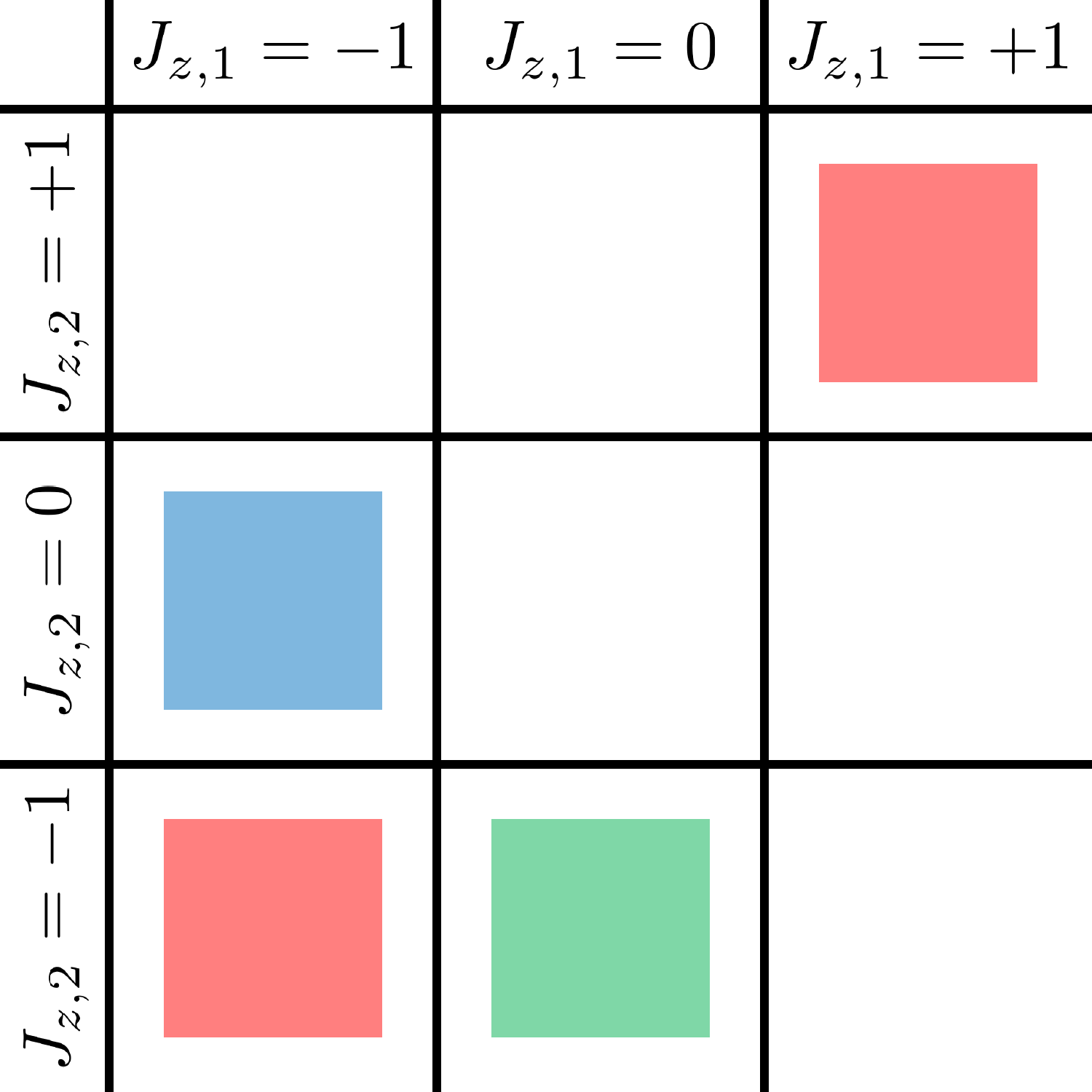}
\end{minipage}
& $O(N)$ & $O(1)$ & An illustration that, while entanglement is necessary for detectable coherent scattering, it is not sufficient: Although the GHZ-like initial state shown here, $(\ket{-1}^{\otimes 2} + \ket{+1}^{\otimes 2})/\sqrt{2}$, is highly entangled, it is not well-suited to the scattering process considered in this table, and therefore performs poorly. This state can, however, be effective for other measurements (e.g. of a Hamiltonian $H \propto J_{z,1}+J_{z,2}$). \\

\end{tblr}

\caption{An illustration of scattering/interaction processes that transfer spin along the $z$ axis to a target composed of spin-1 particles, illustrating various combinations of coherence and detectability. Here we use ``coherence'' to refer to interaction probabilities/rates that scale as the number of particles squared, and ``detectability'' to refer to processes where the initial and final states are separated by an $N$-independent trace distance, such that an optimal measurement can distinguish them with non-negligible probability in the large-$N$ limit. In each row, the plot qualitatively illustrates the initial (red) probability distribution of each particle's spin component $J_z$, as well as the distributions after the scatter/interaction happens with the first/second (green/blue) particle; the final state (in the absence of decoherence) will thus be a superposition of the green and blue states. The spin change is fixed to be $+1$ throughout, i.e. either the first or the second particle's $J_z$ increases by $1$ after the interaction.}
\label{tab:IntuitionTableDiscrete}
\end{table}

\begin{figure}[h]
    \centering
    \includegraphics[width=0.3\textwidth]{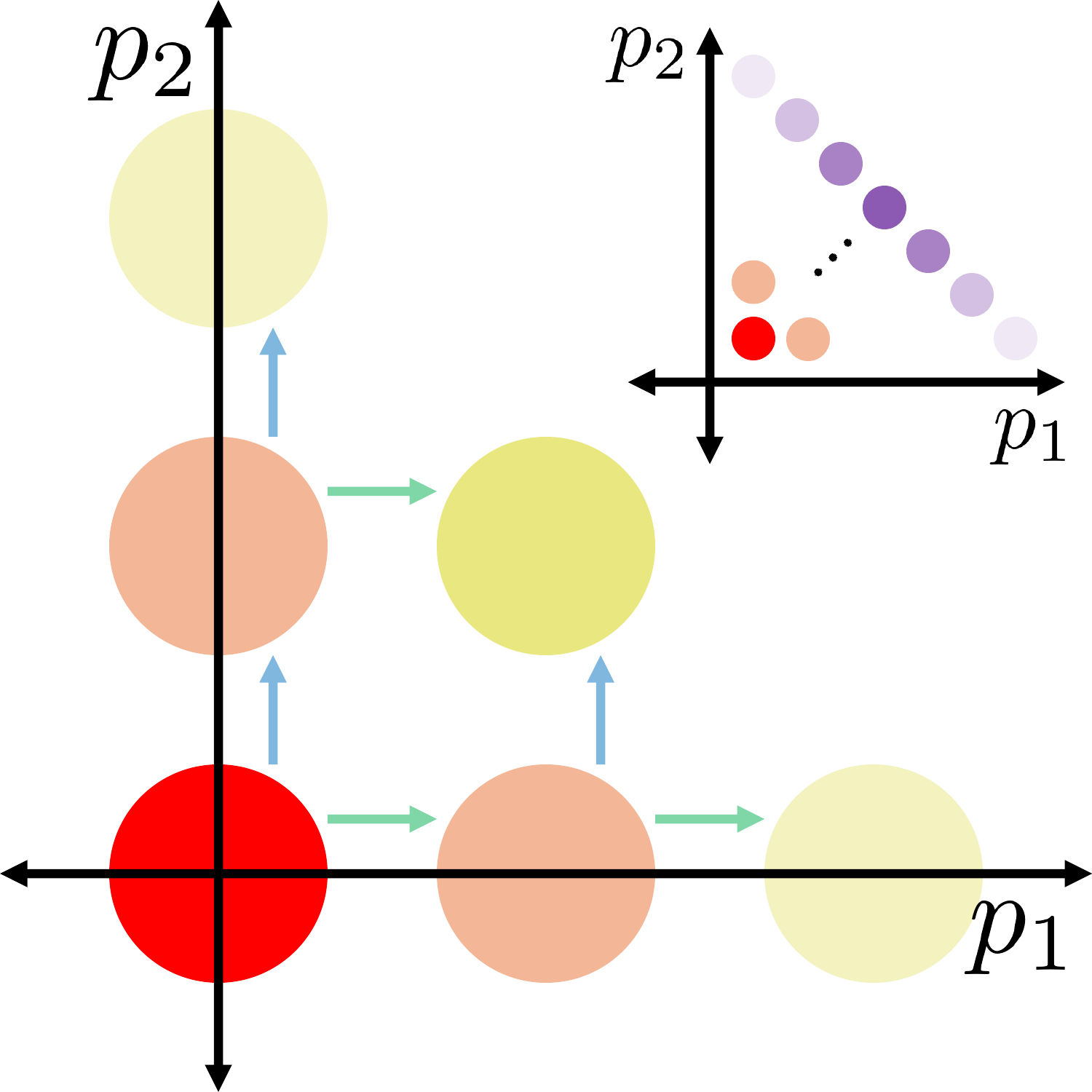}
    \caption{An illustration of how a sequence of scattering events can lead to entanglement in a target. As in Table~\ref{tab:IntuitionTable}, we consider scattering processes that transfer some fixed momentum to either of two particles in a target, whose combined momentum distribution is plotted. The initial state is red, while momentum transfers to the first/second particle are illustrated in green/blue. Thus, after one scattering---and in the absence of any decoherence---the final state is the orange state, in an even superposition of either of the two particles having the additional momentum. After a second scattering, the target reaches the yellow state. In the inset, we show how, after many scattering events, the violet final state approaches a band of nearly fixed $p_1 + p_2$ but widely varying $p_1 - p_2$; this is exactly the form of entanglement that can lead to coherent and distinguishable scattering, as shown in the third row of Table~\ref{tab:IntuitionTable}. Note that, although for simplicity we illustrate a sequence of scattering events whose momentum transfers have uniform signs and magnitudes, this is not necessary: a random distribution of scattering events will lead to a (classical) mixture of different final $p_1 + p_2$ values, but each of the pure state components of the mixed final state will still be entangled like the state illustrated here, and coherent distinguishable scattering from the final state will still be possible. We show an example of this in Figure~\ref{fig:EntanglementAccumulationMixed}.}
    \label{fig:EntanglementAccumulation}
\end{figure}

\begin{figure}[h]
    \includegraphics[width=0.99\linewidth]{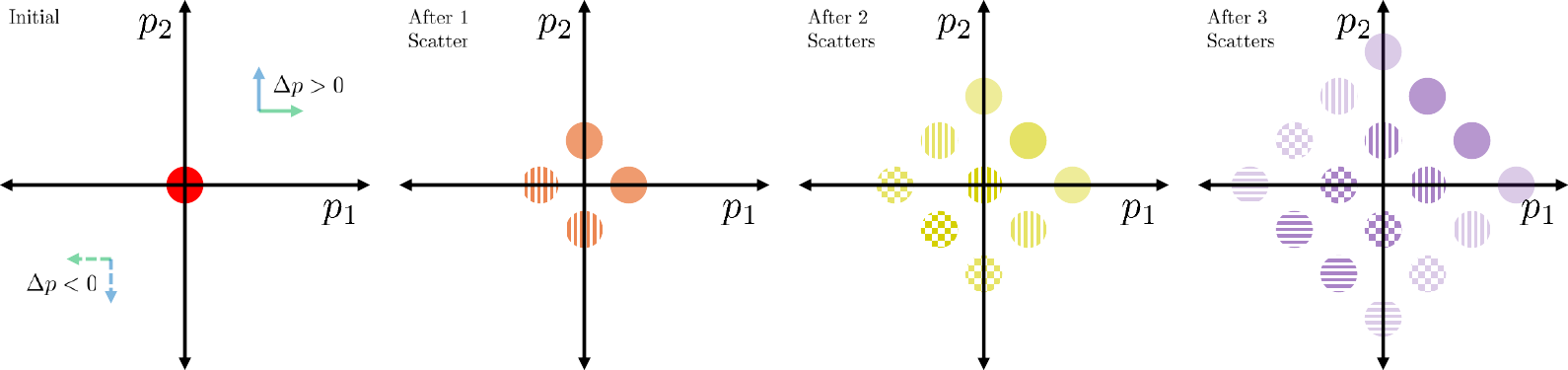}
    \caption{An example of how entanglement accumulation via scattering can occur even when successive scattering events are not identical. In this case we consider a momentum transfer to either particle, analogous to Figure~\ref{fig:EntanglementAccumulation}, but now allowing either sign of the momentum transfer. Since the outgoing particle state is different depending on the momentum transfer's sign, this leads to target states that are mixed, with entanglement within each pure state component of that mixed state. Thus, after one scatter (in orange), the target is in a mixed state composed of two copies of the orange state in Figure~\ref{fig:EntanglementAccumulation} (with one shifted); we show one solid and one striped to emphasize that the two are classically mixed, rather than in a superposition (although each is itself a superposition of two states). Likewise there are three (four) pure state components to the final mixed states after two (three) scatters, each one entangled, shown in yellow (purple) with different patterns for each component. The state's $\EMF$ thus grows over time.}
    \label{fig:EntanglementAccumulationMixed}
\end{figure}

\end{document}